\newcommand{\subscript}[2]{$#1 _ #2$} 
\newcommand{\variable}[1]{\mathsf{#1}}
\newcommand{\constant}[1]{\mathtt{#1}}
\renewcommand{\pccomplexitystyle}[1]{\ensuremath{\mathsf{#1}}}
\newcommand{\IP}{\pccomplexitystyle{IP}}
\newcommand{\PSPACE}{\pccomplexitystyle{PSPACE}}
\newcommand{\project}[1]{\texttt{#1}}
\newcommand{\zeth}{\project{ZETH}}
\newcommand{\zecale}{\project{Zecale}}
\newcommand{\zexe}{\project{ZEXE}}
\newcommand{\ethereum}{\project{Ethereum}}
\newcommand{\coda}{\project{Coda}}
\newcommand{\nym}{\project{Nym}}
\newcommand{\tor}{\project{TOR}}
\newcommand{\loopix}{\project{Loopix}}
\newcommand{\dizk}{\project{DIZK}}
\newcommand{\libsnark}{\project{libsnark}}
\newcommand{\libff}{\project{libff}}
\newcommand{\zecaleP}{\algostyle{ZecaleP}} 
\newcommand{\appP}{\algostyle{BaseAppP}} 
\newcommand{\asset}[1]{\mathtt{#1}}
\newcommand{\ether}{\asset{Ether}}
\let\vec\bm%
\newcommand{\accountstyle}[1]{\mathbf{#1}}
\newcommand{\contractstyle}[1]{\widetilde{\accountstyle{#1}}}
\newcommand{\baseAppContract}{\ensuremath{\contractstyle{baseAppC}}}
\newcommand{\modifiedBaseAppContract}{\ensuremath{\contractstyle{ZbaseAppC}}}
\newcommand{\ZecaleContract}{\ensuremath{\contractstyle{ZecaleC}}}
\newcommand{\ledger}{\mathscr{L}}
\newcommand{\zktx}{\variable{zktx}} 
\newcommand{\aggrtx}{\variable{aggrtx}} 
\newcommand{\inputs}{\variable{inputs}} 
\newcommand{\execAppLogic}{\algostyle{ExecAppLogic}} 
\newcommand{\zkAppCRS}{\variable{\crs_{app}}} 
\newcommand{\zkOtherAppCRS}{\variable{\crs_{\widetilde{app}}}}
\newcommand{\zecaleCRS}{\variable{\crs_{zec}}}
\newcommand{\zecaleRelation}{\algostyle{ZecaleRelation}}
\newcommand{\zkpZecale}{\variable{\pi_{zec}}}
\newcommand{\zkpBaseApp}{\variable{\pi_{app}}}
\newcommand{\inpZecale}{\variable{\inp_{zec}}}
\newcommand{\inpBaseApp}{\variable{\inp_{app}}}
\newcommand{\snarkZecale}{\algostyle{\snark_{zec}}}
\newcommand{\snarkApp}{\algostyle{\snark_{app}}}
\newcommand{\appContractAddress}{\variable{baseAppAddr}}
\newcommand{\ZappContractAddress}{\variable{ZbaseAppAddr}}
\newcommand{\zecaleContractAddress}{\variable{zecaleAddr}}
\newcommand{\dispatchData}{\variable{dispatchData}}
\newcommand{\dispatch}{\algostyle{dispatch}}
\newcommand{\createContractInstance}{\algostyle{createContractInstance}}
\newcommand{\gasSaved}{\variable{gSaved}}
\newcommand{\snarkBatch}{\algostyle{VBATCH}} 
\newcommand{\verifierEq}{\algostyle{\verifier_{eq}}}
\newcommand{\inpPacked}{\variable{xH}}
\newcommand{\inpValidity}{\variable{xValid}}
\newcommand{\processAggregatedTx}{\algostyle{processAggrTx}}
\newcommand{\processTx}{\algostyle{processTx}}
\newcommand{\constructor}{\algostyle{constructor}}
\newcommand{\storage}{\variable{storage}}
\newcommand{\vkHash}{\variable{vkHash}}
\newcommand{\encodeToBytes}{\algostyle{encodeToBytes}}
\newcommand{\byteData}{\variable{byteData}}
\newcommand{\msgSender}{\variable{msgSender}}
\newcommand{\decodeBytes}{\algostyle{decodeBytes}}
\newcommand{\decodedData}{\variable{decodedData}}
\newcommand{\fieldAppCRS}{\variable{``app\_crs"}}
\newcommand{\fieldZecaleAddr}{\variable{``zecale\_addr"}}
\newcommand{\toDigest}{\algostyle{toDigest}}
\newcommand{\toField}{\algostyle{toField}}
\newcommand{\rNested}{\variable{r_n}}
\newcommand{\rWrapping}{\variable{r_w}}
\newcommand{\batchSize}{\constant{BATCH\_SIZE}} 
\newcommand{\verifNProofGas}{\constant{VNProofGas}} 
\newcommand{\verifWProofGas}{\constant{VWProofGas}} 
\newcommand{\txDefaultGas}{\constant{DGAS}}
\newcommand{\game}[1]{\mathsf{#1}}
\newcommand{\zclSND}{\game{ZCL\mhyphen SND}} 
\newcommand{\collRes}{\game{coll\mhyphen res}} 
\newcommand{\snarkSND}{\game{SNARK\mhyphen SND}} 
\newcommand{\curve}[1]{\mathsf{#1}}
\newcommand{\BNCurve}{\curve{BN\mhyphen{}254}}
\newcommand{\BLSZcash}{\curve{BLS12\mhyphen{}381}}
\newcommand{\BLSZexe}{\curve{BLS12\mhyphen{}377}}
\newcommand{\BWSix}{\curve{BW6\mhyphen{}761}}
\newcommand{\CPcurve}{\curve{CP}}
\newcommand{\algostyle}[1]{\mathsf{#1}}
\newcommand{\oracle}[1]{\algostyle{O}^{#1}} 
\newcommand{\setup}{\algostyle{Setup}}
\newcommand{\includeTxInBlock}{\algostyle{IncludeTxInBlock}}
\DeclareMathOperator\shr{shr}
\newcommand{\snark}{\algostyle{\Psi}}
\newcommand{\LAN}{\mathbf{L}}
\newcommand{\LANZECALE}{\LAN^{\project{zec}}}
\newcommand{\REL}{\mathbf{R}}
\newcommand{\RELZECALE}{\REL^{\project{zec}}}
\newcommand{\transcript}{\mathsf{Transcript}}
\newcommand{\groth}{\algostyle{Groth16}}
\newcommand{\piA}{\variable{A}} 
\newcommand{\piB}{\variable{B}} 
\newcommand{\piC}{\variable{C}} 
\newcommand{\polU}{\variable{u}} 
\newcommand{\polV}{\variable{v}} 
\newcommand{\polW}{\variable{w}} 
\newcommand{\crs}{\variable{crs}} 
\newcommand{\td}{\variable{td}} 
\newcommand{\inp}{\variable{x}} 
\newcommand{\wit}{\variable{w}} 
\newcommand{\paramGen}{\mathcal{G}}
\newcommand{\suchthat}{\text{s.t.}} 
\newcommand{\GRP}{\mathbb{G}} 
\newcommand{\GRPord}{r} 
\newcommand{\pair}{e} 
\newcommand{\ggen}{\mathfrak{g}} 
\newcommand{\isEq}{\stackrel{?}{=}}
\newcommand{\cardinality}[1]{|#1|}
\newcommand{\vectorspace}[1]{\mathcal{#1}} 
\newcommand{\lintransform}[1]{\mathcal{#1}} 
\DeclareMathOperator{\Ker}{Ker}
\mathchardef\mhyphen="2D
\definecolor{bananamania}{rgb}{0.98,0.91,0.71}
\definecolor{darkred}{rgb}{0.7,0,0}
\definecolor{lightgray}{rgb}{0.9,0.9,0.9}
\title{Zecale: Reconciling Privacy and Scalability on Ethereum}
\author{Antoine Rondelet}
\institute{Clearmatics, UK \newline
  \email{ar@clearmatics.com}}
\begin{document}
\maketitle

\begin{abstract}
    In this paper, we present \zecale{}, a general purpose SNARK proof aggregator that uses recursive composition of SNARKs. We start by introducing the notion of recursive composition of SNARKs, before introducing \zecale{} as a privacy preserving scalability solution. Then, we list application types that can emerge and be built with \zecale{}. Finally, we argue that such scalability solutions for privacy preserving state transitions are paramount to emulate ``cash'' on blockchain systems.
\keywords{scalability, privacy, digital cash, crypto-economics, zero-knowledge proofs, Ethereum}
\end{abstract}

\section{Introduction}

As blockchain systems gained in popularity, several challenges have arisen revealing some of the current limitations of the technology. While privacy is a known issue on blockchains such as Bitcoin~\cite{bitcoin-whitepaper} and Ethereum~\cite{ethereum-whitepaper}, scalability is another important concern.
In fact, by their very nature of ``append only ledgers'', the more users transact on a blockchain, the more data is added to the state over which validators reach consensus. This increase in blockchain data, if not controlled, can lead to fewer validator nodes (and full nodes) yielding more and more centralization in the system. In addition to that, sudden increase in transaction volumes can dramatically inflate the transaction cost, as witnessed in late 2017 after the raise in popularity of the CryptoKitties game\footnote{\url{https://www.cryptokitties.co/}} that congested the Ethereum network\footnote{\url{https://www.coindesk.com/cat-fight-ethereum-users-clash-cryptokitties-congestion}} and triggered a sharp increase in gas price.
Even today, the important amount of transactions on public blockchains pushes the fees up, making networks like Ethereum less attractive to certain users\footnote{See these tweets for instance: \url{https://twitter.com/intocryptoast/status/1263372756625702913?s=21}, \url{https://twitter.com/sassal0x/status/1264555874992848898?s=21} or \url{https://twitter.com/fennie_wang/status/1266083935093583877?s=21} for instance.}.
Over the past years, several solutions have been developed in order to improve blockchain scalability. Some approaches rely on a technique called ``sharding'' which consists in splitting the entire state of the network into partitions denoted ``shards''. Doing so allows to move away from the ``all miners verify all transactions'' approach by having several network partitions validating transactions concurrently~\cite{sharding-blog-post,eth-sharding}.
Other projects such as \coda~\cite{cryptoeprint:2020:352} introduce the notion of  ``succinct blockchain'', which is built through the use of recursive composition of SNARKs (see~\cref{subsection-recursion}). In \coda~the entire transaction history is replaced by an argument\footnote{i.e.~a computationally sound proof} of computational integrity certifying that the state of the blockchain is valid. As such, any node willing to join the network only needs to verify one (small) argument (a SNARK) instead of going through and verifying the entire transaction history.

Similar techniques have been investigated to develop ``layer 2'' scaling solutions on Ethereum, and constructions such as \emph{``ZK-Rollups''}~\cite{zk-rollups,barry-rollup,zk-sync} have gained tremendous interest\footnote{While the name ``zk-rollup'' is wide-spread in the community, we find this name a bit confusing. In fact, \emph{zero-knowledge} is not strictly required (and generally not used at all in ``zk-rollup'' projects). A better definition of ``rollups'' may be ``proofs of computational integrity of the verification of a set of transactions'', but this long description diverges from the agreed upon and wide-spread terminology. As such, we will stick to the agreed upon vocabulary and use the ``zk-rollup'' term in this paper. We gently warn the reader than this term can be misleading.}.

Finally, as witnessed in the Bitcoin community, \emph{relay networks} have been developed as a way to improve blockchains scalability through better block propagation~\cite{10.1145/3340422.3343640,falcon-btc} (see also~\cite{fibre-btc,bip-152}). In fact, in~\cite{10.1145/3340422.3343640} the authors explain that the use of a relay network along blockchain systems like Bitcoin can help improving the throughput of the system by shortening the block interval --- by speeding up the block propagation while avoiding an increase in the block orphan rate. We note however, that, while improving the transaction processing time, such solutions do not allow to reduce the size of the blockchain.

\medskip

\paragraph{Our contribution.}

In this paper we present \zecale{} --- a general purpose aggregator using recursive composition of SNARKs that allows to improve the scalability of SNARK-based applications on Ethereum via aggregation of transactions off-chain. We show how the privacy solution \zeth{}~\cite{DBLP:journals/corr/abs-1904-00905} is complemented by \zecale{}, and how both solutions can be used to implement digital cash systems.

To do so, we will consider the use of a relay network as a way to gain sender anonymity when coupled with privacy-preserving protocols such as \zeth{}. Additionally, we will show how scaling solutions like \zecale{} can be deployed on nodes constituting the relay network in order to off-load state transaction verification work --- normally carried out by miners --- to relays, and thus enabling to keep the transaction history and the blockchain data condensed.

We note that the focus of this study is articulated around the Ethereum~\cite{ethereum-whitepaper} blockchain. Nevertheless, all results presented here are directly applicable to blockchains supporting smart-contracts deployment and equipped with a Turing-complete execution environment.

\section{Background and Notations}

Let \ppt~denote probabilistic polynomial time.
Let $\secpar \in \NN$ be the security parameter. We assume that all algorithms described receive as an implicit parameter the security parameter written in unary, $\secparam$.
All algorithms are modeled as Turing machines. An efficient algorithm is a probabilistic Turing machine running in polynomial time. All adversaries are modeled as efficient algorithms.
An adversary and all parties they control are denoted by letter $\adv$.
We write $\negl[]$ (resp.~$\poly[]$) to denote a negligible (resp.~polynomial) function.
Following~\cite{DBLP:journals/iacr/Shoup04,DBLP:conf/eurocrypt/BellareR06}, we structure security proofs as sequences of games. As such, we say that the adversary wins game $\game{GAME}$ defined on $X$ if they make $\game{GAME}$ return $1$.
We say that $\adv$ wins $\game{GAME}$ with negligible advantage if $\advantage{\game{GAME}}{\adv, X} \leq \negl$.

By the expression $\cardinality{x}$ we denote the length of $x$ if $x$ is a string, the cardinality of $x$ if $x$ is a set or the number of coordinates in $x$ if $x$ is a vector.
We denote by $[n], n \in \NN$ the set $\{0, \ldots, n-1\}$. We represent string concatenation with the $|| \colon \Lambda^l \times \Lambda^m \to \Lambda^{l+m}$ infix operator, where $\Lambda$ is an alphabet.
Additionally, we denote by $x \gets y$ the operation that assigns to $x$ the value of $y$. Likewise, we represent by $x \sample X$ the operation that assigns to $x$ a value selected uniformly at random from the finite set $X$.

Finally, we denote by $[] \colon S^n \times [n] \to S$ the infix operator that takes a tuple and an index as inputs and returns the element at the given index in the tuple. We use the notation $x[i]$ as syntactic sugar for $x[]i$ (e.g. $x \gets (A,7,U),\ x[2] = U$).

\subsubsection{Bilinear groups.}
\newcommand{\RomanNumeralCaps}[1]{\MakeUppercase{\romannumeral #1}}

Let $\paramGen$ be a bilinear group generator taking $\secparam$ as input and returning a tuple $(\GRPord, \GRP_1, \GRP_2, \GRP_T, \pair)$, where $\GRP_1, \GRP_2, \GRP_T$ are cyclic groups of prime order $\GRPord$, and $\pair$ is a map $\pair \colon \GRP_1 \times \GRP_2 \to \GRP_T$, such that $\pair$ is non-degenerate and bilinear.
We further assume that arithmetic in the groups and computing $\pair$ is efficient, that $\GRP_1 \neq \GRP_2$, and that there does not exist \emph{efficiently computable} homomorphisms between the two source groups (pairing of type \RomanNumeralCaps{3}~\cite{DBLP:journals/dam/GalbraithPS08}).
We denote by $\ggen_1, \ggen_2, \ggen_T$ the generators of the groups $\GRP_1, \GRP_2, \GRP_T$ respectively. We denote by $+$ the group operation in the two source groups $\GRP_1,\GRP_2$. Additionally, we define the infix operator $\cdot \colon \FF_r \times \GRP_{1,2} \to \GRP_{1,2}$ that represents the successive application of the group operation - e.g.~$k \cdot \ggen_1 = \ggen_1 + \cdots + \ggen_1$ ($k$ times). The group operation for the target group $\GRP_T$ is denoted by $*$, and the ``exponentiation'' operator $\hat{} \colon \GRP_T \times \FF_r$ denotes its successive application - e.g.~${\ggen_T}^k = \ggen_T * \ldots * \ggen_T$ ($k$ times).

\section{Preliminaries}

\subsection{Brief overview of $\IP$, $\nizk$ and \pcalgostyle{SNARK}}

In contrast with standard mathematical proofs, Goldwasser, Micali and Rackoff introduced the notion of \emph{zero-knowledge proofs}. These proofs enable a prover \prover~to prove a theorem to a verifier \verifier~without revealing anything other than the fact that the theorem is correct~\cite{DBLP:conf/stoc/GoldwasserMR85}.

In their seminal work, the authors focused on interactive protocols, where both \prover~and \verifier~are modeled as interactive Turing machines communicating by sharing their reading and writing tapes.
In brief, an interactive proof is a procedure by which a prover wants to prove a theorem to a verifier. To that end, the verifier is allowed to flip coins, and use these coin flips to ask questions (i.e.~``send challenges'') to the prover. The prover answers the verifier's questions, and after several interactions, the verifier either accepts or rejects the theorem.

Informally, the complexity class admitting interactive proofs ($\IP$) is defined as the class of languages $\LAN$ with the following properties:
\begin{description}
    \item [Completeness:]
        \[
            \exists \prover\ \suchthat \condprob{\verifier.\verify(x, \pi) = \true}{
                \begin{aligned}
                x \in \LAN \\
                \pi \gets \transcript(\prover, \verifier)
                \end{aligned}
            } \geq \frac{2}{3}
        \]
    \item [Soundness:]
        \[
            \forall \prover,\ \condprob{\verifier.\verify(x, \pi) = \true}{
                \begin{aligned}
                x \not\in \LAN \\
                \pi \gets \transcript(\prover, \verifier)
                \end{aligned}
            } \leq \frac{1}{3}
        \]
\end{description}

where $\transcript(\prover, \verifier)$ is the set of messages exchanged between the prover and the verifier, and where $\verify$ is the verification algorithm ran by the verifier to either accept or reject $x$.
Note that interaction is a very efficient tool for \emph{soundness amplification} as multiple executions of an interactive protocol can be used to bring the soundness error down\footnote{$k$ repetitions allow to bound the soundness error by $\frac{1}{3^k}$}.

Informally, we say that the protocol is \emph{zero-knowledge} if the verifier only learns the validity of the theorem being proven and nothing else.

Importantly, the notion of $\IP$ as per~\cite{DBLP:conf/stoc/GoldwasserMR85} does not make any computational assumption on the prover (modeled as ``all powerful''), while the verifier is assumed to have limited resources (i.e.~be $\ppt$). Likewise, the computational model assumes that both \prover~and \verifier~have a random tape, but that the verifier's coin flips are \emph{private} (i.e.~private coin).

This restriction on coin tosses contrasts with the notion of Arthur-Merlin (\AM) protocols introduced by Babai~\cite{DBLP:conf/stoc/Babai85,Babai1988ArthurMerlinGA}, in which the verifier's coin tosses are public, and thus accessible to the prover. In fact, in this model all the verifier's messages can be replaced by the output of a random beacon~\cite{MR731318} for instance.
Shortly after, Goldwasser and Sipser~\cite{DBLP:journals/acr/GoldwasserS89} showed however, that ``private coin tossing'' is no stronger than ``public coin tossing'' and their work led to the well acclaimed $\IP = \AM$ result.

Interestingly, Shamir showed that $\IP = \PSPACE$~\cite{DBLP:journals/jacm/Shamir92}, demonstrating the great power of randomness and communication\footnote{a simplified proof was published by Shen~\cite{DBLP:journals/jacm/Shen92}}. While $\IP$ is demonstrated to be a wide complexity class\footnote{we know that $\npol \subseteq \PSPACE$}, Goldreich and Oren showed~\cite{DBLP:conf/focs/Oren87,DBLP:journals/joc/GoldreichO94} that removing interactions between prover and verifier while preserving zero-knowledge yields a collapse to $\bpp$ in the plain model.\footnote{note that the word ``collapse'' here reflects our current understanding of the complexity class hierarchy. In fact, the relation between $\pol$ and $\bpp$ and the relation between $\pol$ and $\npol$ is not yet fully understood. As of today, it is believed that $\pol = \bpp$~\cite{DBLP:books/sp/goldreich2011/Goldreich11g,DBLP:conf/stoc/ImpagliazzoW97,DBLP:journals/eatcs/ClementiRT98} and that $\pol \neq \npol$.}

As a way to compensate the removal of interactions, Blum, Feldman and Micali~\cite{DBLP:conf/stoc/BlumFM88}, and then Blum, De Santis, Micali and Persiano~\cite{DBLP:journals/siamcomp/BlumSMP91} studied the notion of \emph{non-interactive zero-knowledge proofs}, in which prover/verifier communications and randomness are substituted by a shared \emph{common random string}.
This setting, generalized to the \emph{common reference string} (CRS) model, and declined in various flavors where, for example, the reference string is assumed to have a specific structure\footnote{in which case we talk about ``Structured Reference String'' (SRS)}, has been used to design a wide range of non-interactive zero-knowledge proof systems ($\nizk$).

We further note that, moving to the non-interactive setting allows to obtain \emph{publicly verifiable} proofs, which is of great interest for various cryptographic applications (e.g.~blockchain protocols).

\subsubsection{\pcalgostyle{SNARK}.}

We now focus on zk-SNARKs, a special type of $\nizk$.
Informally, a (zero-knowledge) Succinct Non-interactive Argument of Knowledge (SNARK), is a small proof of knowledge, which is non-interactive and sound against computationally bounded adversaries. Not surprisingly, the appealing performances of $\nizk$s like SNARKs take their source in the shift from an ``all powerful prover'' (like in $\IP$) to a ``computationally restricted'' prover; allowing protocol designers to rely on cryptographic assumptions to design efficient protocols.

The first known SNARK construction, proposed by Micali~\cite{DBLP:conf/focs/Micali94} coined \emph{Computationally Sound (CS) Proofs} is based on Kilian's construction~\cite{DBLP:conf/stoc/Kilian92} and proven secure in the Random Oracle model.
Following, various SNARK constructions have been established over the past decades (e.g~\cite{DBLP:conf/eurocrypt/GennaroGP013,DBLP:conf/asiacrypt/Groth10,DBLP:conf/tcc/Lipmaa12}). While different SNARK constructions offer different trade-offs, we will be focusing here on the pairing-based SNARK construction~\cite{DBLP:conf/eurocrypt/Groth16} ``compiled from'' \emph{Linear Interactive Proofs} (LIP) as proposed by Bitansky et al.~\cite{DBLP:conf/tcc/BitanskyCIPO13}.

\begin{remark}
    Our focus on~\cite{DBLP:conf/eurocrypt/Groth16} is justified by the fact that this scheme admits very small arguments ($3$ group elements), which, when used along a blockchain system, allows to minimize the size of the transactions. We note however that our result can be used with any other ``NP-complete'' SNARK schemes.
\end{remark}

Below, we provide an informal definition of a SNARK, and invite the reader to consult e.g.~\cite{DBLP:conf/crypto/GrothM17} for precise definitions.

\paragraph{Informal definition of SNARKs.}

Let $\REL \subset \bin^* \times \bin^*$ be a polynomial-time decidable binary relation.
We assume that $\secpar$ is explicitly deductible from the description of $\REL$.
Let $\LAN = \{\inp\ |\ \exists \wit\ \suchthat\ \REL(\inp, \wit)\}$ be the $\npol$ language defined by $\REL$. Here, $\inp$ is the instance\footnote{also referred to as ``public/primary input''} and $\wit$ is the witness.
Roughly speaking, $\snark$ is a publicly verifiable zero-knowledge Succinct Non-interactive Argument of Knowledge (zk-SNARK) if $\snark$ comports four $\ppt$ algorithms $\kgen, \prover, \verifier, \simulator$ such that:
\begin{description}
    \item[CRS generator:] $\kgen$ is a $\ppt$ algorithm that takes an NP-relation $\REL$ as input, runs a one time setup routine, and outputs a \emph{common reference string} (CRS) $\crs$ for this relation along with a \emph{trapdoor} $\td$.
    \item[Prover:]  $\prover$ is a $\ppt$ algorithm that given $(\crs, \inp, \wit)$, such that $(\inp, \wit) \in \REL$, outputs an argument $\pi$.
    \item[Verifier:] $\verifier$ is a $\ppt$ algorithm that on input $(\crs, \inp, \pi)$ returns either $0$ (reject) or $1$ (accept).
    \item[Simulator:] $\simulator$ is a $\ppt$ algorithm that on input $(\crs, \td, \inp)$ outputs an argument $\pi$.
\end{description}

We require a proof system $\snark$ to have the following four properties:
\begin{description}
    \item[Completeness:] $\snark$ is complete if an honest verifier accepts a proof made by an honest prover. That is, the verifier accepts a proof made for $(\inp, \wit) \in \REL$.
    \item[Knowledge soundness:] $\snark$ is knowledge-sound if from an acceptable proof $\pi$ for instance $\inp$ it is feasible for a specialized algorithm called \emph{extractor} to extract a witness $\wit$ such that $(\inp, \wit) \in \REL$.\footnote{note that ``knowledge soundness'' $\Rightarrow$ ``soundness''.}
    \item[Zero knowledge:] $\snark$ is zero-knowledge if for any $\inp \in \LAN$ no adversary can distinguish a proof made by an honest prover on input $(\crs, \inp, \wit)$ from a proof made by the \emph{simulator} on input $(\crs, \inp, \td)$ \emph{but no witness} $\wit$.
    \item[Succinctness:] $\snark$ is succinct if the proof $\pi$ is sub-linear in the size of the instance and witness.
\end{description}

We note that despite \emph{knowledge soundness} being stronger than ``plain'' soundness, it is sometimes still too weak of a notion to satisfy the requirements necessary to deploy a scheme in real world systems. In fact, knowledge-soundness does not protect against Man-in-The-Middle (MiTM) attacks where an adversary can forge an acceptable SNARK after seeing a set of acceptable arguments. In other words, knowledge soundness does not ensure that the SNARK is non-malleable. Recent zk-SNARKs such as~\cite{DBLP:conf/crypto/GrothM17,DBLP:journals/iacr/BoweG18} comply with a strong notion of soundness -- simulation extractability (or simulation knowledge soundness)~\cite{DBLP:conf/focs/Sahai99,DBLP:conf/crypto/SantisCOPS01} -- which prevents MiTM attacks, and which is desirable in many real life applications. SNARKs satisfying the property below are referred to as SE-SNARKs:
\begin{description}
    \item[Simulation extractability:] $\snark$ is simulation-extractable if from any proof $\pi$ for instance $\inp$ output by an adversary with access to an oracle producing simulated proof on given inputs, it is possible for an extractor to extract a witness $\wit$, such that $(\inp, \wit) \in \REL$.
\end{description}

\subsection{Recursive composition of SNARKs}\label{subsection-recursion}

As we know that SNARKs like~\cite{DBLP:conf/eurocrypt/Groth16} can be used to generate arguments for any $\npol$ statements (i.e.~$\inp \in \LAN$, where $\LAN$ is an $\npol$ language), and since we know that the verification algorithm ran by the verifier $\verifier$ is itself in $\npol$, one can wonder if it is possible to generate an argument certifying that another argument has correctly been verified. In other words, is it possible to generate a proof that another proof has correctly been checked?\footnote{In the following we will use ``proof'' and ``argument'' interchangeably to denote computationally sound proofs.}

This question was first studied by Valiant~\cite{DBLP:conf/tcc/Valiant08} who proposed composable proofs of knowledge as a way to achieve \emph{Incrementally Verifiable Computation} (IVC). Further, Bitansky, Canetti, Chiesa and Tromer introduced the notion of \emph{Proof Carrying Data}~\cite{DBLP:conf/innovations/ChiesaT10}, along with a ``boostrapping'' technique~\cite{DBLP:conf/stoc/BitanskyCCT13} to obtain complexity-preserving SNARKs allowing to recursively compose proofs.
The first practical instantiation of recursive (pairing-based) SNARK composition was achieved by Ben-Sasson et al.~\cite{DBLP:journals/algorithmica/Ben-SassonCTV17} using cycles of MNT elliptic curves~\cite{Miyaji2001NewEC} as further studied by Chiesa et al.~\cite{DBLP:journals/siaga/ChiesaCW19}.
Informally, using special tuples of elliptic curves such as cycles (see~\cite{DBLP:journals/em/SilvermanS11,DBLP:journals/siaga/ChiesaCW19} for formal definitions) allows to remove overhead due to finite field characteristic mismatch, and allows to achieve ``infinite recursive SNARK composition''\footnote{it is possible to generate a proof $A$ that another proof $B$ was correctly verified, where $B$ is itself a proof that another proof $C$ was correctly verified, where $C$ is itself a proof that another proof $D$ was verified etc.}

Recent work such as~\cite{DBLP:journals/iacr/BoweCGMMW18} showed that \emph{bounded recursion} was sufficient to build meaningful applications. In their construction, the authors introduced a \emph{pairing-friendly amicable chain} of elliptic curve instantiated as $(\BLSZexe, \CPcurve)$, where $\CPcurve$ is obtained via the Cocks-Pinch method~\cite{CocksPinch} (see~\cite[Section 4.1]{DBLP:journals/joc/FreemanST10}). A more efficient instantiation of the chain was later proposed by El Housni and Guillevic~\cite{DBLP:journals/iacr/HousniG20}.

\subsection{Ethereum}

In the following, we assume familiarity with blockchain systems, and more precisely $\ethereum$. We refer the reader to~\cite{ethereum-whitepaper}, or~\cite{yellow-paper} for an introduction.

\section{Zecale}

\subsection{Motivations}

Despite their broad interest, blockchain systems are often criticized for their inability to ``scale''. Unfortunately, multi-dimensional notions like \emph{scalability} are often simplified to simple metrics such as \emph{Transactions Per Second} (TPS).
Such simplifications turn the ``blockchain scalability problem'' into a TPS maximization problem that fails to capture the true nature of the issue.

\paragraph{What does scalability mean in the context of blockchain systems?}
Answering this question is fundamental to frame the problem that needs to be solved.
Well established payment systems can support thousands of transactions per second, and are systematically used as comparison point to assess the performances of blockchains. Importantly, however, the value proposition of blockchain systems lies in their decentralized nature. As such, keeping the network as decentralized as possible by preserving the network's performances under addition of new nodes in the distributed system is paramount.

Hence, a scalable blockchain system is one that can support a large number of users (high throughput/TPS), as well as a large number of untrusted validator nodes (highly decentralized).
This raises the following paradox:

\medskip

\begin{minipage}[t]{0.95\textwidth}
    \emph{Intuitively, maximizing TPS necessitates to produce more transactions (data) that are processed by the system. However, maximizing the number of validating nodes on the system requires to keep the hardware requirements -- to run a node -- (bandwidth, processing, storage) as low as possible, and thus requires to keep the amount of data, produced and processed, as small as possible.}
\end{minipage}

\medskip

In the following, we refer to a ``scalable blockchain'' as one that implements the right trade-offs allowing to maximize the combination of all scalability parameters.

\paragraph{The cost of privacy.}

Another long standing issue with blockchains is the lack of privacy. In fact, because of their very nature of ``append only'' distributed ledgers, all transactions need to be validated by all miners. As commonly assumed, the transaction data needs to be visible to carry out the verification, which roughly means that \emph{sender}, \emph{recipient} and \emph{amount} of a transaction need to be ``in the clear'' to keep the system sound.
Protocols such as Zerocash/Zcash~\cite{DBLP:conf/sp/Ben-SassonCG0MTV14,zcash-specs}, however, contrast with approaches relying on transparency for security, and rather propose to rely on zk-SNARKs as a way to prove that transactions ``follow the rules of the system'' without disclosing their attributes.

Recently, Rondelet and Zajac~\cite{DBLP:journals/corr/abs-1904-00905} leveraged the (quasi\footnote{due to the need to pay gas for each operations on the $\ethereum$ state, and due to the block gas limit, it is clear that one can easily come up with a state transition that cannot be executed on $\ethereum$ (either because the cost of carrying out the state transition is bigger than the block gas limit, or because the cost of the state transition is bigger than the current supply of $\ether$)}) Turing-complete smart-contract platform of $\ethereum$ as a way to encode a privacy preserving state transition similar to Zerocash. However, and not surprisingly, privacy preserving state transitions like $\zeth$~\cite{DBLP:journals/corr/abs-1904-00905} are significantly more expensive (gas-wise) to carry out than plain $\ethereum$ transactions.\footnote{this is not surprising. Indistinguishability has a price.}
This is partially due to high storage requirements on the smart-contract, and the necessity to carry out multiple cryptographic checks as part of the state transition. Like most SNARK-based applications, one check carried out during the state transition is the verification of the SNARK proof $\pi$ for the base application statement (e.g.~\zeth).

The common framework for SNARK-based layer 2 applications (i.e.~smart-contract) is to:
\begin{enumerate}
    \item Verify the SNARK proof
    \item If the SNARK proof verification is successful, then, execute the state transition specified by the application logic (e.g.~carry out changes in the blockchain state etc.)
\end{enumerate}

Implementing complex smart-contract logic may require to pass a lot of data as input to the smart contract. This means, broadcasting big transactions and blocks on the peer-to-peer network and processing big pieces of data as part of the block mining process. This exacerbates the ``scalability paradox'' above-mentioned.

\paragraph{Scalable privacy.}

$\zecale$ aims to lessen the impact of privacy preserving state transitions like $\zeth$ on the overall system, by minimizing the amount of data sent and processed on-chain. To do this, $\zecale$ off-loads the verification of all zk-SNARK proofs of an application to a piece of software called the \emph{aggregator}. This software component uses recursive proof composition in order to generate a proof of computational integrity certifying that all zk-SNARK proofs have correctly been verified off-chain. This proof is then sent on-chain along with the instances associated with the aggregated proofs. This \emph{unique} proof is then verified on-chain, and each primary inputs are processed according to the associated verification bit.
This technique allows to \emph{aggregate} $N$ SNARK proofs into a single one, and allows to send a \emph{unique} transaction on-chain, without altering the soundness of the system -- \emph{the system remains sound as long as the SNARK-scheme is secure} (see~\cref{subsec:security} for more details). This idea is summarized~\cref{fig:transaction-execution-zecale}

\begin{figure}
    \begin{pchstack}[center]
        \procedure[linenumbering, syntaxhighlight=auto, addkeywords={abort}]{$\includeTxInBlock(\{\zktx_i\}_{i \in [N]})$}{%
            \pcfor i \in [N] \pcdo \\
            \pcind \pcif \neg \snark.\verifier(\zkAppCRS, \zktx_i.\pi, \zktx_i.\inp) \pcthen \\
            \pcind \pcind abort \\
            \pcind \pcelse \\
            \pcind \pcind \execAppLogic(\zktx_i.\inp) \\
            \pcind \pcendif \\
            \pcendfor
        }

        \pchspace

        \procedure[linenumbering, syntaxhighlight=auto, addkeywords={abort}]{$\includeTxInBlock(\aggrtx)$}{%
            \pcif \neg \snark.\verifier(\zecaleCRS, \aggrtx.\pi, \aggrtx.\inp) \pcthen \\
            \pcind abort \\
            \pcendif \\
            \pcfor i \in |\aggrtx.\inputs| \pcdo \\
            \pcind \execAppLogic(\aggrtx.\inputs[i]) \\
            \pcendfor
        }
    \end{pchstack}
    \caption{Left: transaction execution routine without $\zecale$. A set of transactions are processed, all transactions contain a zk-SNARK proof that needs to be verified. Right: transaction execution routine when $\zecale$ is used. Only one transaction -- and SNARK proof -- is processed. $N$ represents the maximum number of $\zktx$ that can be included in a block, where $\zktx$ is a transaction containing a SNARK.}
\label{fig:transaction-execution-zecale}
\end{figure}

\subsection{Technique}

As above-mentioned, $\zecale$ leverages recursive proof composition in order to delegate the individual SNARK proof verification to an off-chain software component -- the \emph{aggregator} (see~\cref{fig:aggregator}). The proof generated by the aggregator is then sent on-chain, along with the primary inputs of all verified proofs in order to execute the state transitions on the state machine (e.g.~\ethereum). Since the proof generated by the aggregator is verified on-chain by all miners, only ``one level of recursion'' is needed in \zecale. This is similar to~\cite{DBLP:journals/iacr/BoweCGMMW18} in that regard.

\begin{figure}[h!]
\centering
\includegraphics[width=1\textwidth]{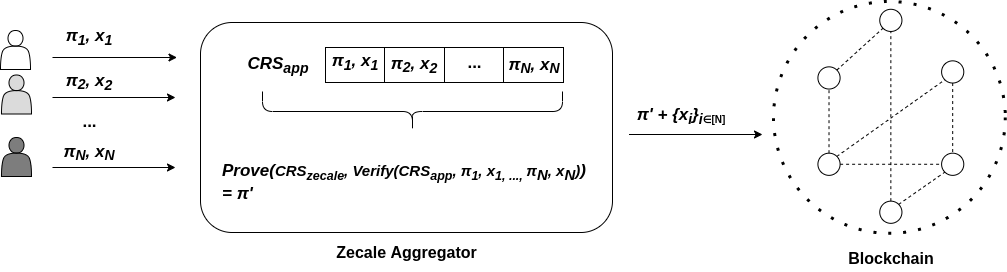}
\caption{Simplified representation of the $\zecale$ aggregator}\label{fig:aggregator}
\end{figure}

As a consequence, we propose to use a \emph{pairing-friendly amicable chain} as a way to implement efficient bounded recursion and bypass several open problems (see~\cite{DBLP:journals/siaga/ChiesaCW19}) and inefficiencies related to the use of low embedding degree cycles of elliptic curves\footnote{if the embedding degrees of the curves are small, then dlog and pairing-based hardness assumptions need to be enforced via the use of prime fields with larger characteristics which renders group and field operations less efficient since more ``limbs'' are needed to represent the algebraic structures' elements.}.

The use of the chain of elliptic curves allows us to generate a set of ``nested'' zk-SNARKs over the first curve, and to generate a ``wrapping'' SNARK\footnote{note that this proof is not zero-knowledge} -- proving correct verification of the set of nested zk-SNARKs -- over the second curve.
As such, it is required that all SNARK-based applications aiming to be used along with $\zecale$ generate zk-SNARK proofs over the first curve of the \emph{pairing-friendly amicable chain} used. The authors of $\zexe$~\cite{DBLP:journals/iacr/BoweCGMMW18} initially proposed a chain instantiated by a curve from the BLS12 family ($\BLSZexe$) and a curve generated using the Cocks-Pinch method. A more efficient instantiation made of $\BLSZexe$ and a Brezing-Weng~\cite{DBLP:journals/dcc/BrezingW05} curve -- denoted $\BWSix$ -- was recently proposed by El Housni and Guillevic~\cite{DBLP:journals/iacr/HousniG20}.
Any such 2-chain (resp.~cycle) of pairing-friendly elliptic curves can be used in $\zecale$. We propose to use $(\BLSZexe, \BWSix)$ as it reflects the state of the art at the time of writing.

\subsection{$\zecale$ language}

In order to provably verify each individual ``nested'' zk-SNARK proofs, it is necessary to encode the SNARK verification algorithm of the base application (i.e.~$\snarkApp.\verifier$) in the correct form: one allowing to efficiently generate a proof of computational integrity of the execution of the algorithm.
This way to encode machine computation so that its correctness can easily be verified via few probabilistic algebraic checks is called \emph{arithmetization}.
SNARKs such as~\cite{DBLP:conf/eurocrypt/Groth16} use a type of arithmetization called \emph{Quadratic Arithmetic Programs} (QAPs) introduced in~\cite{DBLP:conf/eurocrypt/GennaroGP013}.
While describing the full set of algebraic constraints encoding the SNARK verification algorithm as an arithmetic circuit -- defined over a finite field -- is outside of the scope of this paper, we provide below a high-level description of the $\npol$-relation ($\RELZECALE$) characterizing the language of $\zecale$ denoted $\LANZECALE$. This is summarized~\cref{fig:zecale-language}

\begin{figure}[ht]
    \fbox{%
        \begin{minipage}[t]{0.95\textwidth}
            Let $\batchSize$ be a fixed protocol parameter representing the size of the batch of nested proofs.
            \medskip

            \centering
            \procedure{$\zecaleRelation(\inpZecale, (\zkAppCRS, \{(\zkpBaseApp_i, \inpBaseApp_i)\}_{i \in [\batchSize]}))$}{%
                \pcfor i \in [\batchSize] \pcdo \\
                \pcind b_i \gets \snarkApp.\verifier(\zkAppCRS, \zkpBaseApp_i, \inpBaseApp_i) \\
                \pcendfor \\
                \inpZecale = \textstyle\sum_{i \in [\batchSize]} b_i \cdot 2^i
            }

            \medskip

            where $\snarkApp.\verifier$ is the SNARK verification algorithm associated with the SNARK scheme used in the base application. For~\cite{DBLP:conf/eurocrypt/Groth16}, this algorithm consists in carrying out the check: $\pair(\pi.\piA, \pi.\piB) \isEq \pair(\vk.\alpha, \vk.\beta) * \pair(\sum_{i = 0}^{|\inp|} \inp_i \frac{\vk.\beta \cdot \polU_i(x) + \vk.\alpha \cdot \polV_i(x) + \polW_i(x)}{\vk.\gamma}, \vk.\gamma) * \pair(\pi.\piC, \vk.\delta)$, where $\vk$ is the part of the $\crs$ used by the verifier.
        \end{minipage}
    }
    \caption{Pseudo-code representation of the set of constraints that all pairs of inputs ($\inp$, $\wit$) need to satisfy to belong to $\RELZECALE$}\label{fig:zecale-language}
\end{figure}

The attentive reader may realize that the verification check of $\groth$ as described in~\cref{fig:zecale-language} is linear with the number of public inputs. As multiple SNARK-based applications may have different number of primary inputs, one needs to overcome some challenges in order to enable the $\zecale$ relation to be generic enough to be used along with a wide class of SNARK-based applications. In fact, it is important to remember that $\groth$ can be used to prove any statement in $\npol$ (due to the $\npol$-completeness of QAPs), but a new setup phase needs to be ran for each new language, posing some practical challenges.

Fortunately, the authors of~\cite{DBLP:conf/eurocrypt/GennaroGP013} proposed a ``trick'' to break the linear complexity of the SNARK verification check by simply applying an ordinary (i.e.~not extractable) collision-resistant hash function to the statement.
We follow the same approach and propose to extend the NP-relations of all base applications --- that aim to be used with $\zecale$ --- with this additional hashing step to reduce the number of their primary inputs to the same constant.


 To simplify notations in the following sections, we define, for each prime $p$ and hash function $\hash$, the following constant $\constant{InpNb} = \ceil{\variable{lh}/(\floor{\log_2(p)} + 1)}$, where $\variable{lh}$ is the digest length of $\hash$. Moreover, we introduce the two following functions: $\toField_{\hash, p} \colon \bin^{\variable{lh}} \to (\FF_p)^{\constant{InpNb}}$ which takes a hash digest and returns its ``prime field representation''\footnote{Converting a bit string to a field element can be done by taking the sum of all $i$th bits of the string multiplied by $2^i$, e.g.~$(110101)_2$ is represented by $(\textstyle{\sum_{i \in [\cardinality{(110)}]}} (110)[i] \cdot 2^i, \textstyle{\sum_{i \in [\cardinality{(101)}]}} (101)[i] \cdot 2^i) = (6, 5) \in (\FF_7)^2$} (the function is injective, i.e.~if $\variable{lh} > \floor{\log_2(p)} + 1$, $\toField_{\hash, p}$ returns the tuple of elements in $\FF_p$ necessary to \emph{uniquely} represent the digest ``in the field''), and $\toDigest_{\hash, p} \colon (\FF_p)^{\constant{InpNb}} \to \bin^{\variable{lh}}$ defined such that for all digests $h$ of $\hash$, we have $\toDigest_{\hash, p}(\toField_{\hash, p}(h)) = h$. Finally, $\FF_\rNested$ (resp.~$\FF_\rWrapping$) is the scalar field of the first (resp.~second) curve of the 2-chain we use.

Using these additional notations, we represent the ``generic'' $\zecale$ relation~\cref{fig:zecale-language-generic}.

\begin{figure}[!htb]
    \fbox{%
        \begin{minipage}[t]{0.95\textwidth}
            Let $\batchSize$ be a fixed protocol parameter representing the size of the batch of nested proofs.
            \medskip

            \centering
            \procedure{$\zecaleRelation(\inpZecale, (\zkAppCRS, \{(\zkpBaseApp_i, \inpBaseApp_i)\}_{i \in [\batchSize]}))$}{%
                \pcfor i \in [\batchSize] \pcdo \\
                \pcind \pclinecomment{As assumed, the relation of the base application uses}\\
                \pcind \pclinecomment{$\hash$ to do the ``hashing trick'' described in [GGPR13].}\\
                \pcind \pclinecomment{Here $\inpBaseApp_i$ represents the ``non-hashed'' instance that}\\
                \pcind \pclinecomment{we now hash to verify the nested SNARK.}\\
                \pcind \inpZecale.\inpPacked_i = \toField_{\hash, \rNested}(\hash(\inpBaseApp_i))\\
                \pcind b_i \gets \snarkApp.\verifier(\zkAppCRS, \zkpBaseApp_i, \inpZecale.\inpPacked_i) \\
                \pcendfor \\
                \inpZecale.\inpValidity = \textstyle\sum_{i \in [\batchSize]} b_i \cdot 2^i \\
                \inpZecale.\vkHash = \toField_{\hash, \rWrapping}(\hash(\zkAppCRS))\\
            }
        \end{minipage}
    }
    \caption{Pseudo-code representation of the \emph{generic} $\zecale$ relation}\label{fig:zecale-language-generic}
\end{figure}

In the rest of the document, we use $\RELZECALE$ to refer to the NP-relation described~\cref{fig:zecale-language-generic}

\paragraph{A note on zero-knowledge.}

Since the $\zecale$ aggregator generates a ``wrapping'' proof to certify the correct verification of a set of ``nested'' proofs, it is not necessary for the ``wrapping'' proof to be zero-knowledge. In fact, if ``nested'' SNARKs already are zero-knowledge, there is no additional security gain to also enforce zero-knowledge for the ``wrapping'' proof. As such, the randomization steps carried out by the prover $\prover$, during the proof generation, to protect against malicious verifiers, and ensure zero-knowledge, can be omitted. This allows to fasten the aggregator algorithm by removing unnecessary operations.

\paragraph{A note on batch verification.}

A well known method to fasten the verification of a set of SNARKs is to use batching techniques. While it may be tempting to modify the NP-relation $\RELZECALE$~\cref{fig:zecale-language-generic} to generate a proof of correct \emph{batch verification} of a set of zk-proofs, this approach presents a set of \emph{practical} limitations. We investigate the case of batch verification in more details in~\cref{appendix:batch-verif} and \cref{appendix:practical-considerations}.

\subsection{Smart-contract logic}

In addition to generate SNARK proofs for statements of the form $\inp \in \LANZECALE$, $\zecale$ requires some smart-contract logic to function. In fact, after aggregating all the base application's zk-SNARKs, it is necessary to verify the ``wrapping'' proof on-chain, and execute the base application logic for all the primary inputs associated with \emph{valid} ``nested'' proofs.

The smart-contract specifying the set of $\zecale$ state transitions uses a \emph{dispatch mechanism} that forwards the set of instances -- associated to \emph{valid} proofs -- to the base application contract that will then process them by executing the base application logic on the state machine.

We provide~\cref{fig:zecale-smart-contract} a pseudo-code description of the $\processAggregatedTx$ function representing the logic of the smart-contract $\ZecaleContract$ specifying $\zecale$ state transitions. Furthermore, we represent~\cref{fig:zecale-smart-contract-dispatch} the difference between a stand-alone base-application and one used along with $\zecale$.


\begin{figure}[!htb]
    \centering
    \fbox{%
    \parbox{1.015\linewidth}{%
    \begin{pchstack}
        \begin{pcvstack}
        \procedure[linenumbering, syntaxhighlight=auto, addkeywords={abort}]{$\baseAppContract.\constructor(\zkAppCRS, \ldots)$}{%
            \pclinecomment{1. Write the CRS in the storage}\\
            \storage[\fieldAppCRS] \gets \zkAppCRS\\
            \pclinecomment{2. Execute further}\\
            \pclinecomment{application-specific logic}\\
            \ldots \\
            \pcreturn\\
            \\
        }

        \pcvspace

        \procedure[linenumbering, syntaxhighlight=auto, addkeywords={abort}]{$\baseAppContract.\processTx(\zkpBaseApp, \inpBaseApp)$}{%
            \pclinecomment{1. Check that the primary inputs}\\
            \pclinecomment{are in the proper scalar field}\\
            \pcif \inpBaseApp \not\in (\FF_\rNested)^{\cardinality{\inpBaseApp}}\\
            \pcind abort\\
            \pcendif\\
            \pclinecomment{2. Verify the SNARK}\\
            \crs \gets \storage[\fieldAppCRS]\\
            \pcif \neg \snarkApp.\verifier(\crs, \zkpBaseApp, \inpBaseApp):\\
            \pcind abort\\
            \pcendif\\
            \pclinecomment{3. Execute the application logic}\\
            \execAppLogic(\inpBaseApp)\\
            \pcreturn 1
        }
        \end{pcvstack}


        \begin{pcvstack}
        \procedure[linenumbering, syntaxhighlight=auto, addkeywords={abort}]{$\modifiedBaseAppContract.\constructor(\zkAppCRS, \zecaleContractAddress, \ldots)$}{%
            \pclinecomment{1. Write the hash of the CRS in the storage}\\
            \storage[\fieldAppCRS] \gets \hash(\zkAppCRS)\\
            \pclinecomment{2. Write the address of the $\zecale$ contract }\\
            \pclinecomment{in the storage}\\
            \storage[\fieldZecaleAddr] \gets \zecaleContractAddress\\
            \pclinecomment{3. Execute further application-specific logic}\\
            \ldots \\
            \pcreturn
        }

        \pcvspace

        \procedure[linenumbering, syntaxhighlight=auto, addkeywords={abort}]{$\modifiedBaseAppContract.\dispatch(\vkHash, \byteData)$}{%
            \pclinecomment{1. Check that the caller is the registered}\\
            \pclinecomment{$\zecale$ contract}\\
            \pcif \neg\ \msgSender = \storage[\fieldZecaleAddr]\\
            \pcind abort\\
            \pcendif\\
            \pclinecomment{2. Check that the received inputs are for}\\
            \pclinecomment{the right language}\\
            \pcif \neg\ \toDigest_{\hash, \rWrapping}(\vkHash) = \storage[\fieldAppCRS]\\
            \pcind abort\\
            \pcendif\\
            \decodedData \gets \decodeBytes(\byteData) \\
            \pclinecomment{3. Check that the primary inputs are in}\\
            \pclinecomment{the proper scalar field}\\
            \pcif \decodedData \not\in (\FF_\rNested)^{\cardinality{\decodedData}}\\
            \pcind abort\\
            \pcendif\\
            \pclinecomment{4. Execute the base application logic}\\
            \pcfor i \in [|\decodedData|] \pcdo\\
            \pcind \execAppLogic(\decodedData[i])\\
            \pcendfor\\
            \pcreturn 1
        }
        \end{pcvstack}
    \end{pchstack}
    } 
    } 
    \caption{Left: Pseudo-code of the state transition of a stand-alone SNARK-based application. Right: Pseudo-code of the state transition of a SNARK-based application implementing the ``dispatch mechanism'' to be used with $\zecale$.}
    \label{fig:zecale-smart-contract-dispatch}
\end{figure}

\clearpage

\begin{figure}[!htb]
    \centering
    \fbox{%
        \procedure[linenumbering, syntaxhighlight=auto, addkeywords={abort}]{$\ZecaleContract.\processAggregatedTx(\zkpZecale, \inpZecale, \{\inpBaseApp_i\}_{i \in [\batchSize]}, \ZappContractAddress)$}{%
            \pclinecomment{1. Check the inputs before carrying out any expensive computation}\\
            \pcif \inpZecale.\inpValidity\ >\ \textstyle\sum_{i \in [\batchSize]} 2^i \pcthen\\
            \pcind abort\\
            \pcendif\\
            \pclinecomment{Make sure that the application inputs have not been maliciously}\\
            \pclinecomment{tampered with by the $\zecale$ aggregator}\\
            \pcfor i \in [\batchSize] \pcdo \\
            \pcind \pcif \neg \left[\inpZecale.\inpPacked_i = \toField_{\hash, \rNested}(\hash(\inpBaseApp_i))\right] \pcthen\\
            \pcind \pcind abort \\
            \pcind \pcendif \\
            \pcendfor \\
            \pclinecomment{2. Check the aggregation/wrapping SNARK}\\
            \pcif \neg \snarkZecale.\verifier(\zecaleCRS, \zkpZecale, \inpZecale) = \true \pcthen:\\
            \pcind abort \\
            \pcendif \\
            \pclinecomment{3. Execute the base application state transitions}\\
            \pclinecomment{3.1 If none of the nested proofs are correct, there is nothing to dispatch}\\
            \pcif \inpZecale.\inpValidity\ = 0\\
            \pcind abort\\
            \pcendif\\
            \pclinecomment{3.2 Otherwise, dispatch the instances associated with the valid nested proofs}\\
            \dispatchData \gets \{0\}^\batchSize \\
            \pcfor i \in [\batchSize] \pcdo \\
            \pcind \pcif \inpZecale.\inpValidity\ \&\ \mathtt{0x1} \pcthen\\
            \pcind \pcind \dispatchData[i] \gets \inpBaseApp_i \\
            \pcind \pcendif \\
            \pcind \pclinecomment{Right shift by 1 position}\\
            \pcind \shr(\inpZecale.\inpValidity, 1) \\
            \pcendfor \\
            \modifiedBaseAppContract \gets \createContractInstance(\ZappContractAddress)\\
            \byteData \gets \encodeToBytes(\dispatchData) \\
            \modifiedBaseAppContract.\dispatch(\inpZecale.\vkHash, \byteData)\\
            \pcreturn 1
        }
    }
    \caption{Pseudo-code specifying the $\zecale$ state transition}
    \label{fig:zecale-smart-contract}
\end{figure}

In brief, the $\zecale$ state transition can be decomposed into 3 steps:
\begin{description}
    \item[Check the inputs:] As nested proofs are verified off-chain, it is important to make sure that the instances of the nested proofs, forwarded on-chain by the aggregator to execute the base application logic, are the same as the one used during the off-chain ``nested'' proof verification. In other words, it is key to make sure that the base application state transition processes the \emph{``right instances''} (instances that haven't been tampered with by the aggregator).\footnote{note that further security checks need to be enforced at the base application layer to prevent front-running and malleability issues in the context of a malicious aggregator (see~\cref{subsec:aggr-market} for more context, and~\cite{zeth-specs} for an example of non-malleable construction)}. Likewise, we check that $\inpZecale.\inpValidity \in [\textstyle\sum_{i \in [\batchSize]} 2^i]$ in order to abort the state transition as soon as possible if not the case.
    \item[Verify the aggregation SNARK proof:] Once the inputs are checked to be of the right form, the wrapping proof is verified. If the proof verifies correctly, then network participants have overwhelming confidence that the set of nested proofs have correctly been verified off-chain.
    \item[Forward the application data to the base application:] The base application verifies that:
    \begin{itemize}
        \item The calling contract is the genuine $\zecale$ contract --- to be sure that all necessary checks have successfully been carried out.
        \item The received data is made of instances for this application.
    \end{itemize}
    Finally, the base application logic is executed. Note however, that we have explicitly represented the check that consists in verifying that all instances ($\inpBaseApp$) associated to valid nested SNARKs lie in the right finite field. In fact, failure to do such a check can expose the underlying application to modular arithmetic-based attacks like~\cite{poma-attack}.
\end{description}

\begin{example}
    Let $\batchSize = 3$, $\inpZecale.\inpValidity = 5$, and let $(\inpBaseApp_0, \inpBaseApp_1, \inpBaseApp_2)$ be the application instances.
    Since the size of the proofs batch is $3$, at most $3$ zk-SNARKs are valid. As such $\inpZecale.\inpValidity$ is bounded by $(111)_2 = 7$. We check that $\inpZecale.\inpValidity < 7$, which is satisfied here.
    The binary representation of $\inpZecale.\inpValidity$ is $(101)_2$. As such, the nested zk-SNARKs at indices $0$ and $2$ in the batch verified correctly, while the proof at index $1$ was deemed incorrect by the verification algorithm. This means that, out of the tuple of instances $(\inpBaseApp_0, \inpBaseApp_1, \inpBaseApp_2)$, only $\dispatchData \gets (\inpBaseApp_0, \inpBaseApp_2)$ will be forwarded to the base application contract to be processed (e.g.~added to the Merkle tree of commitments etc.~if the base application is $\zeth$).
\end{example}

\subsection{Instantiation of the SNARK scheme}

Despite the fact that $\groth$ is not universal~\cite{DBLP:conf/crypto/GrothKMMM18} and only weakly simulation extractable~\cite{cryptoeprint:2020:811}, it remains of broad interest in applied settings because of its nearly optimal argument size and efficiency.

As our main focus is to minimize the size of data exchanged and processed by miners while keeping the cost of the $\zecale$ state transition as small as possible, we believe that $\groth$ is a good candidate to instantiate the SNARK scheme used in $\zecale$.

It is important to note, however, that recent proof systems such as, e.g.~\cite{DBLP:journals/iacr/GabizonWC19,DBLP:conf/eurocrypt/ChiesaHMMVW20} could also be used to instantiate the SNARK scheme used in $\zecale$. Likewise, there are no requirements to use the same proof system in $\zecale$ and in the base application. For instance, it is feasible for $\zeth$ to use $\groth$, while generating $\zecale$ wrapping proofs can be done using another proof system, such as~\cite{DBLP:conf/crypto/GrothM17} for instance.

\subsection{Security}\label{subsec:security}

In the following, we prove that $\zecale{}$ is secure. Namely, we show that the protocol preserves the soundness of the underlying blockchain system.

To study the soundness of $\zecale$ we want to show that the probability that an adversary $\adv$ can use $\zecale$ in order to break the soundness of a ledger $\ledger$ is negligible in $\secpar$. We do so by defining the soundness game $\zclSND$ below, and argue that the probability of winning this game (also referred to as $\advantage{\zclSND}{\adv, \ledger}$) is negligible.

\newcommand{\APPS}{\variable{APPS}} 
\newcommand{\capp}{\variable{capp}} 
\newcommand{\myChi}{\raisebox{2pt}{$\chi$}} 

\newcommand{\genNestedProof}{\algostyle{genNestedProof}}
\newcommand{\verifNestedProof}{\algostyle{verifNestedProof}}
\newcommand{\genWrappingProof}{\algostyle{genWrappingProof}}
\newcommand{\verifWrappingProof}{\algostyle{verifWrappingProof}}

We denote by $\APPS$ a set of applications deployed on $\ledger$ during the execution of the $\setup$ algorithm. In the following, an application $\variable{app}$ will be represented by a tuple $(\zkAppCRS, \appContractAddress, \ZappContractAddress)$. Moreover, we denote by $\mathcal{I} = \Pi \diamond \myChi,\ \cardinality{\mathcal{I}} = \cardinality{\myChi} = \cardinality{\Pi} = \batchSize$ the set of pairs of SNARKs proofs generated with $\snarkApp$ ($\Pi$), and associated set of primary inputs ($\myChi$). We denote by $\diamond$ the operator that takes two ordered sets of same cardinality as input and builds a resulting set which $i$th element is the pair of the $i$th elements of the input sets.
Furthermore, $\mathcal{S} \subset \myChi$.

\begin{center}
\fbox{%
    \procedure{$\zclSND(\secpar)$}{%
        (\zecaleCRS, \ledger, \APPS) \gets \setup(\secpar)\\
        \state \gets \adv^{\oracle{\ledger, \snarkApp, \snarkZecale}}(\zecaleCRS, \APPS)\\
        \capp \sample \APPS\\
        (\zkpZecale, \inpZecale, \mathcal{I}, \mathcal{S}) \gets \adv^{\oracle{\ledger, \snarkApp, \snarkZecale}}(\zecaleCRS, \APPS, \state, \capp)\\
        b \gets \zecaleP(\zecaleCRS, \zkpZecale, \inpZecale, \myChi, \mathcal{S}, \capp.\ZappContractAddress)\ \land\ \\
        \pcind \left[\exists \inp_i \in \mathcal{S},\ (\pi_i, \inp_i) \in \mathcal{I},\ \neg \appP(\pi_i, \inp_i, \capp.\appContractAddress)\right]\\
        \pcreturn b
    }
}
\end{center}

In the game above, $\adv$ can do 2 types of oracle queries to the ledger $\ledger$:
\begin{itemize}
    \item $\processTx$ which takes $(\pi, \inp, \appContractAddress)$, and which execute the function $\processTx$ of $\baseAppContract$ at address $\appContractAddress$ on input $(\pi, \inp)$.
    \item $\processAggregatedTx$ which takes $(\zkpZecale, \inpZecale, \{\inp_i\}_{i \in [\batchSize]}, \ZappContractAddress)$ as inputs and executes the function $\processAggregatedTx$ of the contract $\ZecaleContract$ on the same inputs.
\end{itemize}

Likewise $\adv$ is allowed to do oracle queries to generate ``nested'' (resp.~``wrapping'') proofs --- i.e.~call $\snarkApp.\prover$ (resp.~$\snarkZecale.\prover$) --- and verify them --- i.e.~execute $\snarkApp.\verifier$ (resp.~$\snarkZecale.\verifier$):
\begin{itemize}
    \item $\genNestedProof$: takes $(\crs, \inp, \wit)$ as input, where $\crs$ is the CRS of one of the applications in $\APPS$, and returns the output of $\snarkApp.\prover(\crs, \inp, \wit)$.
    \item $\verifNestedProof$: takes $(\crs, \pi, \inp)$ as input, where $\crs$ is the CRS of one of the applications in $\APPS$, and returns the output of $\snarkApp.\verifier(\crs, \pi, \inp)$.
    \item $\genWrappingProof$: takes $(\inpZecale, \crs, \{\pi_i, \inp_i\}_{i \in [\batchSize]})$ as input, where $\crs$ is the CRS of one of the applications in $\APPS$, and returns the output of $\snarkZecale.\prover(\zecaleCRS, \inpZecale, (\crs, \{\pi_i, \inp_i\}_{i \in [\batchSize]}))$.
    \item $\verifWrappingProof$: takes $(\zecaleCRS, \zkpZecale, \inpZecale)$ as input, and returns the output of $\snarkZecale.\verifier(\zecaleCRS, \zkpZecale, \inpZecale)$.
\end{itemize}

In the game $\zclSND$, $\zecaleP(\zecaleCRS, \zkpZecale, \inpZecale, \myChi, \mathcal{S}, \ZappContractAddress)$ is the predicate that returns $\true$ if the value of $\decodedData$ (see line 18 in the $\dispatch$ function~\cref{fig:zecale-smart-contract-dispatch}) equals $\mathcal{S}$ when $\ZecaleContract.\processAggregatedTx$ is called on $(\zkpZecale,\allowbreak \inpZecale,\allowbreak \myChi,\allowbreak \ZappContractAddress)$, and returns $\false$ otherwise.

Moreover, $\appP(\pi_i, \inp_i, \appContractAddress)$ is the predicate that returns $\true$ if $\baseAppContract.\processTx(\pi_i, \inp_i)$, where $\baseAppContract$ is the application contract at address $\appContractAddress$, returns $1$. This predicate returns $\false$ in all other cases.

All in all, winning the game above means that the adversary used $\zecale$ as a way to carry out changes in the ledger $\ledger$'s state that should not have been done by only using the stand-alone base application contract (i.e.~without $\zecale$).

\begin{theorem}
    Let $\adv$ be a $\ppt$ adversary. If $\hash$ is a collision-resistant hash function (i.e.~$\advantage{\collRes}{\adv, \hash} \leq f(\secpar)$, $f$ negligible)\footnote{Note that the requirement on $\toField$ being an \emph{injective map} is important, as it allows to be sure that the function is not a ``source of collisions''.}, and if $\snarkZecale$ is a sound SNARK scheme (i.e.~$\advantage{\snarkSND}{\adv, \snarkZecale} \leq g(\secpar)$, $g$ negligible), then $\advantage{\zclSND}{\adv, \ledger} \leq \negl$.
\end{theorem}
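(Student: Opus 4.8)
The plan is to prove the statement as a short sequence of games, in the style used elsewhere in the paper, reducing a victory in $\zclSND$ to breaking either the soundness of $\snarkZecale$ or the collision resistance of $\hash$. The starting observation is that any winning run must have $\zecaleP$ return $\true$ with a nonempty $\mathcal{S}$ (the second winning clause needs some $\inp_i \in \mathcal{S}$); hence the call $\ZecaleContract.\processAggregatedTx(\zkpZecale, \inpZecale, \myChi, \capp.\ZappContractAddress)$ ran to completion and dispatched exactly $\mathcal{S}$ without aborting. In particular the wrapping check $\snarkZecale.\verifier(\zecaleCRS, \zkpZecale, \inpZecale) = \true$ passed, the consistency loop enforced $\inpZecale.\inpPacked_i = \toField_{\hash,\rNested}(\hash(\inpBaseApp_i))$ for every $i$ (with $\inpBaseApp_i$ the forwarded instance in $\myChi$), the guard in $\modifiedBaseAppContract.\dispatch$ enforced $\toDigest_{\hash,\rWrapping}(\inpZecale.\vkHash) = \hash(\capp.\zkAppCRS)$, and every dispatched instance passed the field test $\inp_i \in (\FF_\rNested)^{\cardinality{\inp_i}}$. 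Let $\game{G}_0$ be the real game and write $\Pr[\game{G}_j]$ for the probability $\adv$ wins in game $j$.

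The first hop, yielding $\game{G}_1$, aborts whenever $\snarkZecale.\verifier$ accepts $(\zecaleCRS, \zkpZecale, \inpZecale)$ while $\inpZecale \notin \LANZECALE$. A reduction $\adv'$ simulates the entire $\zclSND$ experiment (it runs $\setup$, holds $\zecaleCRS$, and answers all $\processTx$, $\processAggregatedTx$, $\genNestedProof$, $\verifNestedProof$, $\genWrappingProof$ and $\verifWrappingProof$ queries itself) and forwards the pair $(\inpZecale, \zkpZecale)$ emitted by $\adv$ to its soundness challenger, so $\left|\Pr[\game{G}_0] - \Pr[\game{G}_1]\right| \leq \advantage{\snarkSND}{\adv', \snarkZecale} \leq g(\secpar)$. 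In $\game{G}_1$ the verified instance is in the language, so there is a witness $(\zkAppCRS^{\ast}, \{(\zkpBaseApp_i^{\ast}, \inpBaseApp_i^{\ast})\}_{i \in [\batchSize]})$ on which $\zecaleRelation$ (\cref{fig:zecale-language-generic}) outputs exactly $\inpZecale$; in particular, for every index $i$ whose bit in $\inpZecale.\inpValidity$ is $1$ (precisely the dispatched positions) we have $\snarkApp.\verifier(\zkAppCRS^{\ast}, \zkpBaseApp_i^{\ast}, \inpZecale.\inpPacked_i) = 1$, together with $\inpZecale.\inpPacked_i = \toField_{\hash,\rNested}(\hash(\inpBaseApp_i^{\ast}))$ and $\inpZecale.\vkHash = \toField_{\hash,\rWrapping}(\hash(\zkAppCRS^{\ast}))$.

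The second hop, yielding $\game{G}_2$, aborts whenever, for some dispatched $i$, the witness value disagrees with the on-chain value, i.e.\ $\inpBaseApp_i^{\ast} \neq \inpBaseApp_i$ or $\zkAppCRS^{\ast} \neq \capp.\zkAppCRS$. Equating the two expressions for $\inpZecale.\inpPacked_i$ and for $\inpZecale.\vkHash$ and using that $\toField_{\hash,\cdot}$ is \emph{injective}, any such disagreement forces either $\hash(\inpBaseApp_i) = \hash(\inpBaseApp_i^{\ast})$ with distinct preimages, or $\hash(\capp.\zkAppCRS) = \hash(\zkAppCRS^{\ast})$ with distinct preimages, i.e.\ a genuine collision. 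A reduction $\adv''$ simulating the game (and obtaining the witness guaranteed in $\game{G}_1$) and outputting the colliding pair gives $\left|\Pr[\game{G}_1] - \Pr[\game{G}_2]\right| \leq \advantage{\collRes}{\adv'', \hash} \leq f(\secpar)$.

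Finally I would argue $\Pr[\game{G}_2] = 0$: in $\game{G}_2$ every dispatched $\inp_i \in \mathcal{S}$ satisfies $\inpBaseApp_i = \inp_i$, is verified under the genuine CRS $\capp.\zkAppCRS$, and already passed the field guard, so $\inp_i$ is a true statement of the base application's (hashed) language and $\baseAppContract.\processTx$ on a valid proof for $\inp_i$ returns $1$ — contradicting the winning clause $\neg\appP(\pi_i, \inp_i, \capp.\appContractAddress)$. Collecting the hops gives $\advantage{\zclSND}{\adv, \ledger} \leq f(\secpar) + g(\secpar) = \negl$. The step I expect to be the main obstacle is exactly this last one: $\appP$ tests the \emph{specific} adversarial proof $\pi_i$ carried in $\mathcal{I}$, whereas plain soundness only certifies the \emph{existence} of a valid nested proof $\zkpBaseApp_i^{\ast}$ for $\inp_i$. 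Closing this gap cleanly requires either strengthening the hypothesis on $\snarkZecale$ to knowledge soundness — so the extractor returns the actually aggregated proofs and forces $\zkpBaseApp_i^{\ast} = \pi_i$ — or reading $\appP$ as membership of $\inp_i$ in the base language; I would make this dependence explicit, since it is where the reduction's force is concentrated.
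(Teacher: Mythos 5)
Your reduction is essentially the paper's proof recast as a sequence of games. Your first hop (wrapping verifier accepts while $\inpZecale \notin \LANZECALE$) is the paper's case A, bounded by $\advantage{\snarkSND}{\adv, \snarkZecale}$; your second hop (the witness CRS or a dispatched witness instance disagrees with the on-chain value, which by injectivity of $\toField$ forces a genuine collision in $\hash$) merges the paper's cases B and C, which the paper bounds separately to get $2 \cdot \advantage{\collRes}{\adv, \hash}$ where you get a single collision-resistance term. The paper also explicitly disposes of the field-membership guard as an event of probability $0$ (its event $(E_1)$), which your setup absorbs into the observation that $\zecaleP$ accepting means all guards passed. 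The real difference is organizational: the paper enumerates adversary strategies by \emph{how} $\zkpZecale$ was produced (wrapped an invalid proof; wrapped proofs for a different application's CRS; wrapped valid proofs then tampered with instances), whereas you hop on \emph{events}. Your version is cleaner precisely because strategy enumeration is where the paper's argument is not airtight.

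The obstacle you flag in the last step is genuine, and it is present in the paper's own proof, not only in yours. The paper's case A writes that ``at least one proof ($\pi_i$) in the set is not valid (i.e.~$\inp_i \not\in \LAN^{\project{capp}}$)'', silently identifying invalidity of the \emph{specific} adversary-supplied $\pi_i$ with non-membership of $\inp_i$ in the base language. Under the literal game definition these differ: an adversary can obtain genuinely valid nested proofs for instances $\inp_i$, honestly wrap them (so $\zecaleP$ accepts, and neither $\snarkZecale$-soundness nor collision resistance is violated), and then place \emph{different}, garbage proofs paired with those same instances into $\mathcal{I}$. Since $\processAggregatedTx$ never sees the nested proofs, $\zecaleP$ still returns $\true$, while $\appP(\pi_i, \inp_i, \capp.\appContractAddress)$ returns $\false$, so the game is won without breaking any assumption; this strategy falls outside the paper's cases A/B/C and survives both of your hops (your $\game{G}_2$ comparison checks only instances and the CRS, not the proofs). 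So the theorem holds only under the repair you propose: read $\appP$ as membership of $\inp_i$ in the base (hashed) language, or restate the game so the winning condition refers to extracted witnesses. Note also that knowledge soundness of $\snarkZecale$ alone does not force the extracted $\zkpBaseApp_i^{\ast}$ to equal $\pi_i$; it only yields \emph{some} verifying nested proof, so the language-membership reading (which the paper implicitly adopts) is what actually closes the argument. Making that reading explicit, as you do, is a correction to the paper rather than a defect of your proof.
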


\begin{proof}
    Let $X = (\zecaleCRS, \zkpZecale, \inpZecale, \myChi, \mathcal{S}, \capp.\ZappContractAddress)$ be such that $\zecaleP(X) = \true$.
    Additionally, let $\mathcal{I}, \mathcal{S} \subset \myChi$ be such that $\exists \inp_i \in \mathcal{S},\ \allowbreak (\pi_i, \inp_i) \in \mathcal{I},\ \allowbreak \neg \appP(\pi_i,\allowbreak \inp_i,\allowbreak \capp.\appContractAddress)$.

    By looking at $\baseAppContract$ (see~\cref{fig:zecale-smart-contract-dispatch}), we see that for $\appP(\pi_i,\allowbreak \inp_i,\allowbreak \capp.\appContractAddress)$ to return $\false$, at least one of the two predicates below needs to be \emph{not} satisfied:
    \begin{enumerate}[label=(\subscript{E}{{\arabic*}})]
        \item $\inpBaseApp_i \in (\FF_\rNested)^{\cardinality{\inpBaseApp_i}}$
        \item $\snarkApp.\verifier(\zkAppCRS, \pi_i, \inp_i)$
    \end{enumerate}

    \medskip

    We study the probability of these two events below.
    \begin{enumerate}[label=(\subscript{E}{{\arabic*}})]
        \item Since the $\zecaleP$ predicate was satisfied (by assumption), this means that the check $\decodedData \in (\FF_\rNested)^{\cardinality{\decodedData}}$ was satisfied. The probability to satisfy this check on $\zecale$ while not satisfying it on the base application contract is 0 (the same check is done on both $\modifiedBaseAppContract$ and $\baseAppContract$).
        \item We distinguish three cases for which $\zecaleP$ can return $\true$, yet one of the $\appP$ predicates returns $\false$ because the SNARK verification check $\snarkApp.\verifier(\zkAppCRS, \pi_i, \inp_i)$ is not satisfied:
            \begin{enumerate}[label=\Alph*.]
                \item $\zkpZecale \gets \snarkZecale.\prover(\zecaleCRS, \inpZecale, \zkAppCRS,\mathcal{I})$, where $\mathcal{I}$ is a set containing proof/instance pairs for $\capp$, but where \emph{at least one proof ($\pi_i$) in the set is not valid (i.e.~$\inp_i \not\in \LAN^{\project{capp}}$), and its associated instance is in $\mathcal{S}$}.
                For $\zecaleP$ to be $\true$, and since the invalid instance was included in $\mathcal{S}$, this means that the associated invalid proof was deemed valid since the check on $\inpValidity$ in $\RELZECALE$ was successful. Since the proof was correctly rejected on $\appP$, this means that the SNARK $\snarkZecale$ is not sound since the statement $\inpZecale \in \LANZECALE$ was deemed correct while in reality $\inpZecale \not\in \LANZECALE$. We denote the probability of this event by $\advantage{\snarkSND}{\adv, \snarkZecale}$.
                \item $\zkpZecale \gets \snarkZecale.\prover(\zecaleCRS, \inpZecale, \zkOtherAppCRS, \mathcal{I})$, where $\mathcal{I}$ is a set containing only valid proof/instance pairs for $\variable{\widetilde{app}}$, where \emph{$\variable{\widetilde{app}} \neq \capp$}. In this case, it is also clear that $\appP$ will return $\false$ on all input pairs in $\mathcal{I}$ as they have been generated for another application (under another CRS which is different from $\capp.\zkAppCRS$, the one used in $\appP$ in the game). However, for $\zecaleP$ to be true, this means that the check $\toDigest_{\hash, \rWrapping}(\vkHash) = \storage[\fieldAppCRS]$ (line 8 in the $\dispatch$ function in~\cref{fig:zecale-smart-contract-dispatch}) was successful. However, by looking at~\cref{fig:zecale-language-generic}, we know that the value of $\inpZecale.\vkHash$ is constrained to equal to $\toField_{\hash, \rWrapping}(\hash(\variable{\widetilde{app}}))$ for $\inpZecale$ to be in $\LANZECALE$. Hence, assuming that $\snarkZecale$ is sound, for $\zecaleP$ to accept, it is necessary to have $\toDigest_{\hash, \rWrapping}(\toField_{\hash, \rWrapping}(\hash(\variable{\widetilde{app}}))) = \hash(\capp)$, which by the definition of $\toDigest_{\hash, \rWrapping}$ and $\toField_{\hash, \rWrapping}$ means that $\hash(\variable{\widetilde{app}}) = \hash(\capp)$, where $\capp \neq \variable{\widetilde{app}}$. This implies that $\adv$ needs to find a collision in $\hash$. We denote the probability of this event by $\advantage{\collRes}{\adv, \hash}$.
                \item $\zkpZecale \gets \snarkZecale.\prover(\zecaleCRS, \inpZecale, \zkAppCRS, \{\pi_i, \inp_i\}_{i \in [\batchSize]})$, where $\{\pi_i, \inp_i\}_{i \in [\batchSize]}$ is a set containing only valid proof/instance pairs for $\capp$. Furthermore, we set $\mathcal{I} \gets \{\pi_i, \widetilde{\inp_i}\}_{i \in [\batchSize]}$, where $\exists j \in [\batchSize],\ \widetilde{\inp_j} \neq \inp_j$ (i.e.~one of the ``nested'' instance has been tampered with by the aggregator after generating $\zkpZecale$). For $\zecaleP$ to accept such input (which is rejected by $\appP$), the check $\inpZecale.\inpPacked_j = \toField_{\hash, \rNested}(\hash(\widetilde{\inp_j}))$ (lines 7-11~\cref{fig:zecale-smart-contract}) needs to be satisfied. However, since $\inpZecale.\inpPacked_j$ is constrained to be equal to $\toField_{\hash, \rNested}(\hash(\inp_j))$ (see~\cref{fig:zecale-language-generic}), since $\toField_{\hash, \rNested}$ is injective and since $\widetilde{\inp_j} \neq \inp_j$, then $\adv$ broke the collision resistance of $\hash$.
            \end{enumerate}
    \end{enumerate}

    All in all, $\advantage{\zclSND}{\adv, \ledger} \leq \prob{E_1} + \prob{E_2}$, where $\prob{E_2} \leq \advantage{\snarkSND}{\adv, \snarkZecale} + 2 \cdot \advantage{\collRes}{\adv, \hash}$. As such, we have $\advantage{\zclSND}{\adv, \ledger} \leq 0 + \left(\advantage{\snarkSND}{\adv, \snarkZecale} + 2 \cdot \advantage{\collRes}{\adv, \hash}\right)$, which is negligible in $\secpar$.

    \qed

\end{proof}

\begin{remark}
    We note that the hash function $\hash$ used to hash the instances associated to the ``nested proofs'' does not necessarily need to be the same hash function as the one used the hash the verification key. It is perfectly feasible --- and (maybe) in some contexts even desirable --- to instantiate these hash functions differently as long as both functions comply with the security requirements mentioned above.
\end{remark}

\begin{remark}
    As observed by Duncan Tebbs~\cite{duncan-attack} it is necessary to check on $\modifiedBaseAppContract$ that the calling $\zecale$ contract (i.e.~$\ZecaleContract$) is genuine in order to be sure, on the application contract, that all the checks necessary to make the protocol sound are properly carried out. Failure to check the address of the calling contract would render the protocol vulnerable to a range of malicious calling contracts.
\end{remark}

\section{Applications}

In this section we present a few types of applications of $\zecale$, and conclude by arguing that $\zecale$ and privacy preserving solutions like $\zeth$ can be used as building blocks to implement blockchain-based digital cash systems.

We note that none of the applications discussed below are mutually exclusive nor that they comprehensively represent the landscape of potential applications of $\zecale$.

\subsection{Application types 1: Blockchain users run the $\zecale$ aggregator locally}

In this setting, the $\zecale$ aggregator piece of software is ran on the network users' machines in order to aggregate their own batches of transactions locally. Doing so allows users to save gas since only one transaction is sent on-chain, and only one proof needs to be verified as part of the smart-contract execution. Running $\zecale$ locally to batch one's transactions allows to save at most $\gasSaved$ gas:
\begin{equation}\label{eq:gas-saved}
    \gasSaved = \txDefaultGas \cdot (\batchSize - 1) + \batchSize \cdot \verifNProofGas - \verifWProofGas\,
\end{equation}
where $\txDefaultGas$ is the intrinsic gas of a transaction, $\verifNProofGas$ the gas necessary to verify one nested proof, and $\verifWProofGas$ is the gas necessary to verify the wrapping proof.

\paragraph{Would batch verification make sense in this setting?}\label{par:batch-verif-app-type1}

In the context where $\zecale$ only aims to be used in this setting (i.e.~as a way to equip network users with a way to batch their transactions locally), it may be tempting to modify $\RELZECALE$ in order to support batch verification of proofs.
In fact, many of the practical security issues discussed in~\cref{appendix:practical-considerations} do not apply here anymore since all proofs in the batch are generated by the user directly.
However, since all nested proofs are both generated \emph{and} batch-verified (as part of the wrapping proof generation) by the user, it is important, for the security of the system, to make sure that a malicious user cannot craft a set of scalars in the batch verification equation~\cref{eq:groth-batch-verify} that could violate the soundness of the system. In other words, it is necessary to make sure that the set of scalars used in the batch verification equation are not under the prover's control. While picking a set of random field elements cannot be enforced using arithmetic gates, one may want to leverage Pseudo-Random Functions (PRFs) as a way to deterministically derive a set of random scalars from the set of proofs in the batch and add this derivation process to $\RELZECALE$. Further discussion on that is provided in~\cref{appendix:batch-verif}.

\subsection{Application type 2: Blockchain miners run the $\zecale$ aggregator}

Another application of $\zecale$ consists in adding the aggregator software logic as part of the block production on the blockchain.
Doing so requires to make some changes to the blockchain protocol -- which in many cases, may not be desired -- but could however, lead to interesting scenarios and platform economics in which miners are incentivized to aggregate transactions in the blocks they produce.
One such protocol modification may, for instance, consist in extending the block production reward with an aggregation reward, and design a penalty that would diminish the block production reward in the event where a block proposed by a miner contains transactions that could have been aggregated.
More drastically, one may want to change the validity rules of a block~\cite[Section 4.3]{yellow-paper}, by enforcing that all blocks containing data that could have been aggregated via $\zecale$, are deemed invalid on the network and thus rejected.

Such protocols would minimize data redundancy on-chain and would allow to minimize the growth of data on the blockchain, at the cost of modifying the base protocol.

\subsection{Application type 3: Creation of an aggregation market}\label{subsec:aggr-market}

Finally, in this $3$rd category of applications we argue that solutions such as $\zecale$ could be used to enrich the blockchain ecosystem by creating new economic opportunities.

Privacy preserving solutions like $\zeth$ allow to keep the value and the recipient of a payment secret. Nevertheless, because of the need to pay for the gas of the state transition, network users can observe when another user sends a $\zeth$ transaction (see~\cite{DBLP:journals/corr/abs-1904-00905} for more discussions on this).

Despite this information leakage (i.e.~$\ethereum$ balances going ``up and down''), the transaction graph remains blurred (see~\cite[Figure 2]{DBLP:journals/corr/abs-1904-00905}). Users emitting $\zeth$ transactions cannot be associated to specific payments. Likewise, while \emph{transaction unobservability} is not achievable on a blockchain system (miners and other network users can see the transactions in the system), we note that the possibility to emit ``dummy $\zeth$ payments'' -- at the cost of paying the price of the state transition -- allows to create additional noise in the system in order to achieve \emph{payment unobservability}.

Nevertheless, this ability to detect when a user triggers a privacy preserving state transition on the distributed state machine (whether it is a dummy payment or not) may prevent the deployment and adoption of solutions like $\zeth$ in countries where the use of Privacy Enhancing Technologies (PETs) is prohibited. Making sure to remove this information leakage is of tremendous importance for the wide deployment of such solutions.

Below, we show how one could leverage solutions such as $\zecale$ along with the use of Anonymous Communication (AC) protocols like mix-networks in order to \emph{bypass} gas-related information leakages and hide the sender of transactions.

\subsubsection{Anonymous communication protocols and sender anonymity.}

As above-mentioned, solutions like $\zeth$ allow to achieve \emph{recipient anonymity} and \emph{relationship anonymity} (as defined in~\cite{Pfitzmann09aterminology,DBLP:journals/jpc/BackesKMMM16}). Unfortunately, \emph{sender anonymity} is not ensured by the protocol because of the need to pay gas to execute state transition on the distributed state machine.

While removing the need to pay gas for each transaction on the blockchain would undoubtedly remove some types of leakages and could ultimately be used to allow transactions to be emitted from newly created accounts (with no funds) as a way to gain sender anonymity; it is clear that such solution would expose the system to a wide class of attacks -- such a Denial of Services (DoS).
Instead, another way to obtain sender anonymity would be to leverage a network of \emph{relay nodes} that would listen to incoming messages/transactions and emit/relay them on-chain. Such service could be rewarded by a \emph{relay fee}.

Resorting to relay nodes as a way to gain sender anonymity by moving the need to pay gas onto another entity is not without risk however. In addition to the need to design a sound crypto-economic protocol and implementing the right incentive structure to reward relays and keep the system secure; it is also necessary to make sure that the \emph{system remains censorship resistant} to make sure that no malicious relay node can prevent transactions from being sent and executed on-chain -- even if deemed irrational by the incentive structure.

Interestingly, if a SNARK-based state transition \emph{only} processes a zero-knowledge SNARK proof along with the associated set of public inputs (that do not leak the sender of the transaction) during its execution, it is possible to use anonymous communication protocols to simply route the zero-knowledge proof and the instance to a relay. Routing such information using an anonymous network based on onion routing~\cite{DBLP:journals/cacm/GoldschlagRS99} (e.g.~\tor~\cite{tor-design}) or mix-networks~\cite{DBLP:series/ais/Chaum03,ba-phd-thesis,DBLP:journals/pieee/SampigethayaP06} along with cryptographic packet format~\cite{DBLP:conf/sp/DanezisG09} (e.g.~\loopix~\cite{DBLP:conf/uss/PiotrowskaHEMD17} or \nym~\cite{nym-lightpaper}) would protect the transaction originator from any malicious relays, and would render any targeted censorship strategy inefficient (see~\cref{fig:relay-shuffle}).

\begin{figure}[!htb]
\centering
\includegraphics[width=1\textwidth]{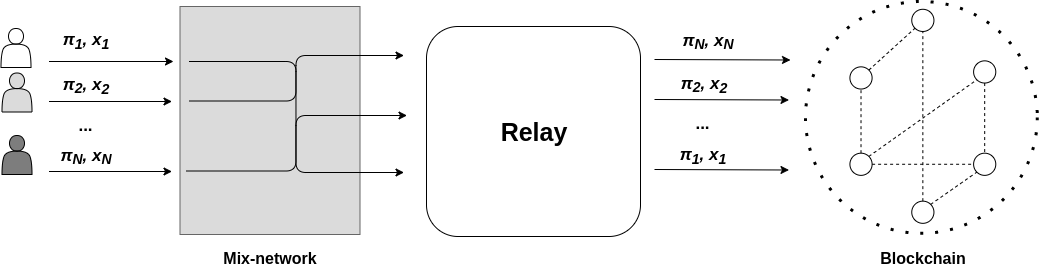}
\caption{Simplified representation of an anonymity layer allowing to send proofs anonymously to a relay}\label{fig:relay-shuffle}
\end{figure}

\subsubsection{$\zecale$ as an aggregation proxy to scale SNARK-based blockchain applications.}

In addition to simply relay received zk-SNARK proofs on-chain, we envision scenarios where \emph{relay nodes} could also run the $\zecale$ aggregator as a way to compress the set of received proofs, into a single SNARK-proof to be sent on-chain (see~\cref{fig:aggregator-shuffle}).

While relaying transactions allows transaction originators to become anonymous, aggregating received zk-SNARKs before relaying them on-chain allows blockchain validators/miners to exchange and process less data. Like ``honest relaying'', ``honest aggregation'' of transactions should also be worth a fee that could, for instance, be paid by blockchain miners to reward the aggregators for their ``data compression'' work, which ultimately renders the blockchain system more efficient.

\begin{figure}[!htb]
\centering
\includegraphics[width=1\textwidth]{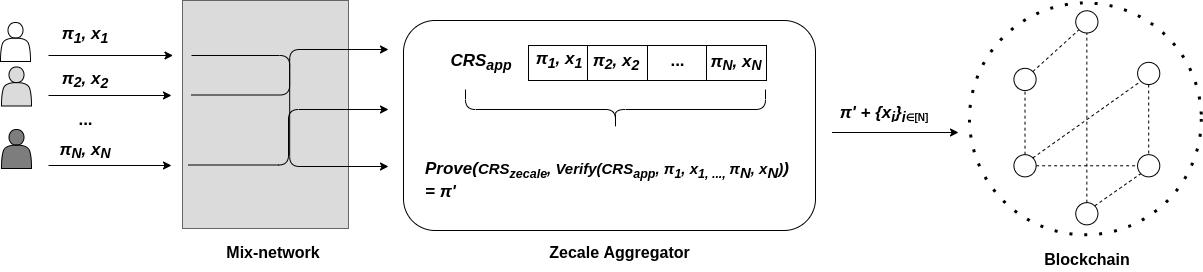}
\caption{Simplified representation of an anonymity layer allowing to send proofs anonymously to a relay running the $\zecale$ aggregator}\label{fig:aggregator-shuffle}
\end{figure}

\subsubsection{Aggregation market.}

While the design of specific incentive structures is deferred for later work, we highlight that delegating the proof aggregation to a set of relays on an \emph{``aggregation network''} can pave the way for what we believe to be interesting scenarios, where all relay/aggregators compete to process as many zk-SNARK proof as possible in order to maximize their fees. Such competition may be a driver for innovation as each aggregator operator would be incentivized to implement state of the art SNARK proof generation algorithms running on efficient hardware (e.g.~GPU) in the hope to capture as much traffic as possible on the aggregation network and generate profit.

\begin{remark}
    We note that for blockchains based on Nakamoto-style consensus~\cite{bitcoin-whitepaper}, block production work is intrinsically different from zk-SNARK proof production.
    As such, we believe that the distinct roles of miners and aggregators are very complementary and reflect the different nature of the computational tasks. Delegating the wrapping proof generation (i.e.~the aggregation of zk-SNARK proofs) to a set of aggregators seems to be a very natural model.
\end{remark}

\paragraph{Toward the formation of ``aggregation pools''?}
As we know that carrying out $\groth$ SNARK verification as part of proving that $\inp \in \LANZECALE$ is expensive (i.e.~pairings are expressed by large R1CS), we believe that new form of ``pools'' (similar to mining pools) could emerge on the aggregation network. In fact, Wu et al.~\cite{DBLP:conf/uss/WuZCPS18} proposed $\dizk$, a system that distributes the generation of a zero knowledge proof across machines in a compute cluster. Some blockchain systems leveraging $\zecale$ may be willing to maximize the size of the batch of aggregated zk-SNARK proofs -- i.e.~maximize $\batchSize$ -- in order to maximize data compression and flatten the growth of the blockchain data. However, doing so would significantly complicate the set of algebraic constraints representing $\RELZECALE$ and increase the degrees of the interpolated polynomials manipulated during the proof generation process. This could be compensated by the formation of ``aggregation pools'' where a set of contributors could allocate some compute resources to the generation of a wrapping proof, and be rewarded pro-rata the resources allocated. We believe that further investigation of the formation of aggregation pools is of great interest. However, we defer such study for later work.

\section{$\zeth$ and $\zecale$ as building blocks of digital cash systems}\label{sec:digital-cash-system}

Using $\zeth$ as base application of $\zecale$ as in~\cref{subsec:aggr-market} allows to gain sender anonymity and diminish the amount of data sent and processed on-chain. As mentioned in~\cite[Remark A.2.2]{zeth-specs}, $\zeth$ has been designed to be used in various setting, and the separate addressing scheme of the protocol~\cite[Section 1.4]{zeth-specs} allows to distinguish between users having an $\ethereum$ account and those with only a $\zeth$ address. While $\ethereum$ account holders can trigger arbitrary state transitions, using $\zecale$ along with $\zeth$ is a way to expose the $\zeth$ state transition -- to users holding only a $\zeth$ address -- through relay and aggregator nodes.
We believe that such mechanism is a step toward implementing digital cash systems on blockchains like $\ethereum$, but stress that sound crypto-economics are fundamental to keep the system secure. The possibility to retrieve funds from the $\zeth$ contract in the form of a public output value --- while carrying out a privacy preserving transfer --- certainly is a mechanism that can be used to define and build such incentive structures.

\paragraph{A note on latency.}

An obvious drawback of composing protocols such as $\zeth$, $\zecale$ and $\ethereum$ is the increased settlement latency (i.e.~the time between the emission of the transaction by the user and the processing of the transaction on-chain).
In fact, it is clear that sending zk-SNARK proofs to a relay node over an anonymous network in order to relay them on the blockchain adds latency to execution of the state transition on the distributed state machine (i.e.~blockchain). Any extra computation carried out on the relay node -- like the proof aggregation -- increases this latency even more.

Importantly, high latency in the system may discourage users from using it. As such, and in addition to aggressively optimizing their infrastructure and software to generate wrapping proofs (see~\cref{subsec:aggr-market}), operators of aggregation nodes may be willing to pay high gas price to fasten the inclusion of their aggregation transactions in the blockchain. In a way that echoes and resembles~\cite{coda-economics}, operators of $\zecale$ aggregator nodes would bid to maximize their chances to see their aggregate transactions mined first on the chain. Such high gas price may decrease the aggregation profits but could allow to capture more market shares by offering lower latency access to the $\zeth$ contract to carry out ``fast'' cash payments.

\subsection{Towards better cash}

Today, cash still plays a fundamental role in the economy. Despite a decline in the number of cash payments due to alternative and friction-less digital payment methods, cash remains hugely important, especially for the numerous ``unbanked'' people~\cite{future-cash}.

While transitioning to digital cash seems unavoidable, this transition presents interesting opportunities but also numerous challenges ans risks~\cite{Tanaka_1996,digital-cash}.

A digital cash system as described in~\cref{sec:digital-cash-system} is \emph{secure}, \emph{untraceable}, \emph{distributed}\footnote{the central counter-party processing the payments is replaced by a distributed ledger}, but does not, however, allow to carry out payments at no fees (which is the case for cash). Nevertheless, not only exposing the $\zeth$ state transition via an anonymous aggregation network can grant restricted and secure access to the distributed ledger to people that do not hold an account on the ledger (but only a $\zeth$ address), but it also makes it possible to \emph{implement Anti-Money Laundering (AML) policies for cash payments by modifying the $\zeth$ language in order to (additionally) prove -- in zero-knowledge -- that a payment satisfies a compliance predicate (without disclosing the payment details)}.

We believe that being able to securely grant wide access to the $\zeth$ state transition as well as cryptographically enforce financial regulatory policies is a step toward solving some of the drawbacks of cash and could pave the way to build a more efficient and secure but also more inclusive and stable economy.

\section{Implementation}

In this section we provide an overview of the software architecture of $\zecale$. We invite the reader to consult the open-source project\footnote{\url{https://github.com/clearmatics/zecale}} for more details on the implementation.

\subsection{Ethereum client}

As mentioned in~\cref{subsection-recursion}, any pairing friendly cycle/amicable chain of elliptic curves can be used in $\zecale$ to generate the ``wrapping'' SNARK proof (over one curve) certifying the correct verification of a set of ``nested'' zk-SNARK proofs (generated over another curve). Unfortunately, the current version of $\ethereum$ does not offer the possibility to carry out arithmetic over a wide class of elliptic curves\footnote{Only arithmetic over $\BNCurve$ and $\BLSZcash$ is exposed via precompiled contracts at the time of writing:~\url{https://github.com/ethereum/go-ethereum/blob/0c82928981028e8b32b5852c38b095d2e0d26b04/core/vm/contracts.go\#L83}}.

As of today, implementing $\zecale$ requires to fork from $\ethereum$ mainnet, and add new precompiled contracts to support the $\BWSix$ pairing group operations.
Nevertheless, Ethereum Improvement Proposals such as~\cite{eip1962} may change this situation and expose a wide class of pairing groups via precompiled contracts.

\begin{remark}
Additional EIPs such as~\cite{eip2028} have recently been incorporated to the $\ethereum$ protocol, making layer 2 scalability solutions like $\zecale$ more efficient.
\end{remark}

\subsection{$\zecale$ aggregator}

Following a similar software architecture as in $\zeth$~\cite[Section 6]{DBLP:journals/corr/abs-1904-00905}, we decided to implement the $\zecale$ aggregator as a self-contained software component written in C++ and using a modified version of the $\libsnark$ and $\libff$ libraries\footnote{see~\url{https://github.com/clearmatics/libsnark} and~\url{https://github.com/clearmatics/libff}}. This software component exposes a Remote Procedure Call (RPC) interface allowing to receive zk-SNARK proofs to aggregate. The received zk-SNARK proofs and corresponding instances are then added in an ``application pool'' which represents the set of ``nested'' zk-SNARK proofs to aggregate for a given SNARK-based application.

In addition to receive zk-SNARK proofs to aggregate, other endpoints have been added to the Application Programming Interface (API) allowing to register applications (i.e.~deposit the $\crs$ associated to a given application), along with the back-end logic to enable a $\zecale$ aggregator to function across several applications. A high-level representation of the software components composing $\zecale$ is provided~\cref{fig:zecale-flow}.

\begin{figure}[!htb]
\centering
\includegraphics[width=1\textwidth]{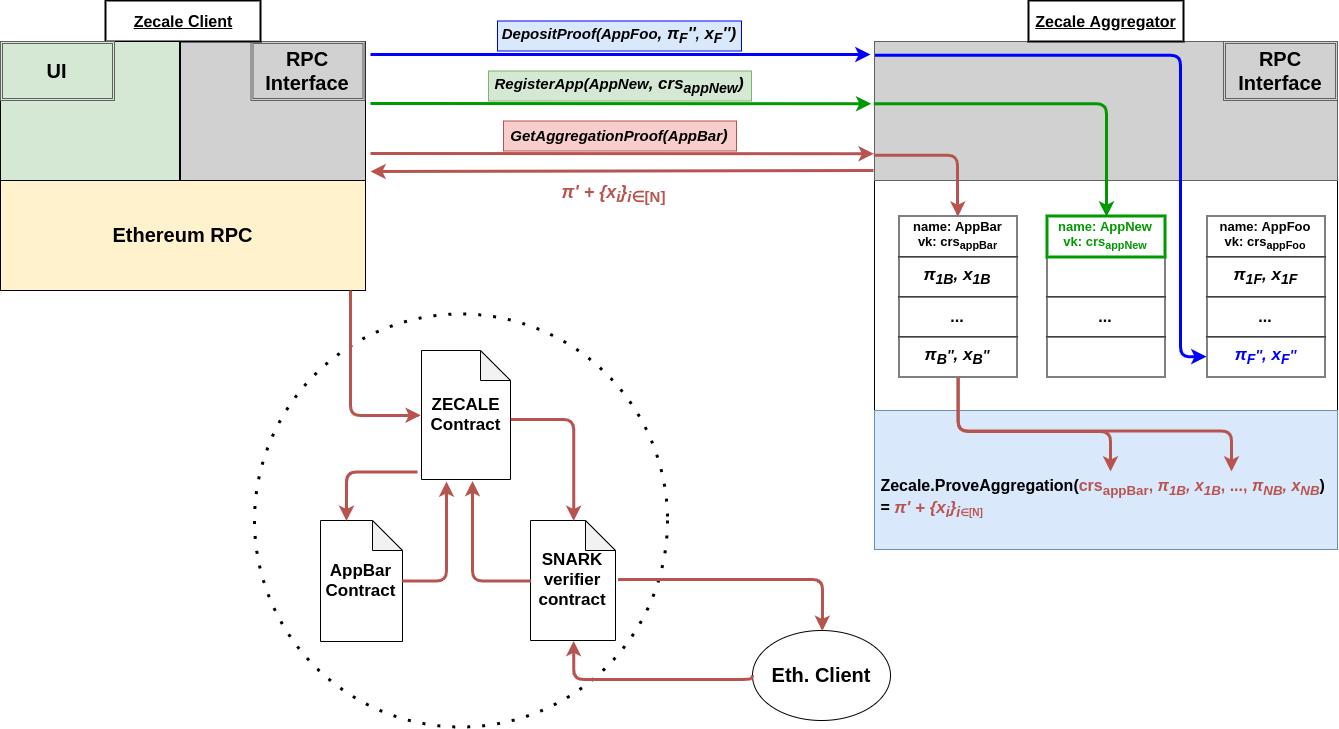}
\caption{An overview of the architecture and flow of function calls for $\zecale$}\label{fig:zecale-flow}
\end{figure}

\subsection{$\zecale$ contract}

A pseudo-code of the smart-contract logic is provided~\cref{fig:zecale-smart-contract}. To be deployed on $\ethereum$, such contract can be implemented using the Solidity programming language\footnote{https://solidity.readthedocs.io/en/v0.7.0/} for instance.

We note that, as previously mentioned, an explicit check --- verifying that all received nested instances lie in the right finite field --- is done by the application contract in $\zecale$. This check is kept on the smart-contract rather than ``moved'' as an extra constraint in $\RELZECALE$ for efficiency reasons. In fact, $\zecaleRelation$ is defined over the scalar field $\FF_\rWrapping$ of $\BWSix$ which is also the base field of $\BLSZexe$. That means that all variables/wires manipulated in the algebraic representation of the $\zecale$ NP-relation are defined over $\FF_\rWrapping$, while the ``nested instances'' are defined over the scalar field $\FF_\rNested$ of $\BLSZexe$. As such, for a given $x$, checking that $x \in \FF_\rNested$ in $\zecaleRelation$ incurs a non-negligible overhead due to the characteristic mismatch of the two scalar field, i.e.$\rWrapping \neq \rNested$. Hence, to bypass the necessity to provably carry out an expensive range-check on the instance, we decided to keep this check on the application smart-contract.

Importantly, since any entry in a set of ``nested instances'' lying outside of the field $\FF_\rNested$ will cause the $\zecale$ state transition to abort, implementers of $\zecale$ aggregator services may --- at their will --- decide to carry out this membership test on ``nested instances'' before accepting the incoming $(\pi, \inp)$ pair from a user. As such, all requests containing ``nested instances'' that are not elements of $\FF_\rNested$ will be deemed invalid and end up being rejected by the aggregation service. Alternatively, the smart-contract logic (as illustrated~\cref{fig:zecale-smart-contract} and~\cref{fig:zecale-smart-contract-dispatch}) can be modified to ignore/skip instances failing to satisfy the field membership test, without aborting.

\medskip

Finally, we note that elements of the base field of $\BLSZexe$ and of the base and scalar field of $\BWSix$ have a binary representation that exceeds the 256-bit word length of the $\ethereum$ (stack-based) Virtual Machine (EVM). As such, representing elements of such fields requires multiple entries on the stack, which makes manipulating such elements relatively expensive gas-wise. As such, implementers may want to allocate some time to optimize the implementation of costly operations of the $\zecale$ smart-contract\footnote{Using inline assembly for instance (see~\url{https://solidity.readthedocs.io/en/v0.7.0/assembly.html?highlight=assembly})}, or may want to implement additional precompiled contracts to carry out costly operations natively on the client.

\section{Conclusion}

In this paper we presented $\zecale$, a general purpose SNARK proof aggregator allowing to scale SNARK-based applications on $\ethereum$. We explained how $\zecale$ works and showed that multiple applications could benefit from it. Likewise, various new types of applications and new market opportunities could emerge from this work, and we believe that $\zecale$ and blockchain-based privacy preserving protocols constitute valuable building blocks for digital cash systems.

Importantly, while $\zecale$ diminishes the data sent and processed on-chain, and thus, makes SNARK-based applications more scalable, we emphasize that this work is not \emph{``the''} solution to on-chain scalability, but is rather what we believe to be a step toward building more scalable systems.

\section{Acknowledgments}

We thank Duncan Tebbs for his constructive remarks throughout this work. Likewise, we thank Michal Zajac for helpful comments on a previous version of this work.

\bibliographystyle{alpha}
\bibliography{references}

\newcommand{\etalchar}[1]{$^{#1}$}
\begin{thebibliography}{BMMV19}

\bibitem[Adi06]{ba-phd-thesis}
Ben Adida.
\newblock Phd thesis.
\newblock \url{http://assets.adida.net/research/phd-thesis.pdf}, 2006.
\newblock Accessed: 2020-05-27.

\bibitem[ASB{\etalchar{+}}19]{eip2028}
Alexey Akhunov, Eli~Ben Sasson, Tom Brand, Louis Guthmann, and Avihu Levy.
\newblock Eip 2028: Transaction data gas cost reduction.
\newblock \url{https://eips.ethereum.org/EIPS/eip-2028}, 2019.

\bibitem[Aut16]{eth-sharding}
Ethereum~Wiki Authors.
\newblock Sharding faq.
\newblock \url{https://github.com/ethereum/wiki/wiki/Sharding-FAQ}, 2016.
\newblock Accessed: 2020-05-27.

\bibitem[Bab85]{DBLP:conf/stoc/Babai85}
L{\'{a}}szl{\'{o}} Babai.
\newblock Trading group theory for randomness.
\newblock In Robert Sedgewick, editor, {\em Proceedings of the 17th Annual
  {ACM} Symposium on Theory of Computing, May 6-8, 1985, Providence, Rhode
  Island, {USA}}, pages 421--429. {ACM}, 1985.

\bibitem[Bar18]{barry-rollup}
BarryWhiteHat.
\newblock rool\_up.
\newblock \url{https://github.com/barryWhiteHat/roll_up}, 2018.

\bibitem[BCCT13]{DBLP:conf/stoc/BitanskyCCT13}
Nir Bitansky, Ran Canetti, Alessandro Chiesa, and Eran Tromer.
\newblock Recursive composition and bootstrapping for {SNARKS} and
  proof-carrying data.
\newblock In Dan Boneh, Tim Roughgarden, and Joan Feigenbaum, editors, {\em
  Symposium on Theory of Computing Conference, STOC'13, Palo Alto, CA, USA,
  June 1-4, 2013}, pages 111--120. {ACM}, 2013.

\bibitem[BCG{\etalchar{+}}14]{DBLP:conf/sp/Ben-SassonCG0MTV14}
Eli Ben{-}Sasson, Alessandro Chiesa, Christina Garman, Matthew Green, Ian
  Miers, Eran Tromer, and Madars Virza.
\newblock Zerocash: Decentralized anonymous payments from bitcoin.
\newblock In {\em 2014 {IEEE} Symposium on Security and Privacy, {SP} 2014,
  Berkeley, CA, USA, May 18-21, 2014}, pages 459--474. {IEEE} Computer Society,
  2014.

\bibitem[BCG{\etalchar{+}}18]{DBLP:journals/iacr/BoweCGMMW18}
Sean Bowe, Alessandro Chiesa, Matthew Green, Ian Miers, Pratyush Mishra, and
  Howard Wu.
\newblock Zexe: Enabling decentralized private computation.
\newblock {\em {IACR} Cryptol. ePrint Arch.}, 2018:962, 2018.

\bibitem[BCI{\etalchar{+}}13]{DBLP:conf/tcc/BitanskyCIPO13}
Nir Bitansky, Alessandro Chiesa, Yuval Ishai, Rafail Ostrovsky, and Omer
  Paneth.
\newblock Succinct non-interactive arguments via linear interactive proofs.
\newblock In Amit Sahai, editor, {\em Theory of Cryptography - 10th Theory of
  Cryptography Conference, {TCC} 2013, Tokyo, Japan, March 3-6, 2013.
  Proceedings}, volume 7785 of {\em Lecture Notes in Computer Science}, pages
  315--333. Springer, 2013.

\bibitem[BCTV17]{DBLP:journals/algorithmica/Ben-SassonCTV17}
Eli Ben{-}Sasson, Alessandro Chiesa, Eran Tromer, and Madars Virza.
\newblock Scalable zero knowledge via cycles of elliptic curves.
\newblock {\em Algorithmica}, 79(4):1102--1160, 2017.

\bibitem[BDLO12]{DBLP:conf/indocrypt/BernsteinDLO12}
Daniel~J. Bernstein, Jeroen Doumen, Tanja Lange, and Jan{-}Jaap Oosterwijk.
\newblock Faster batch forgery identification.
\newblock In Steven~D. Galbraith and Mridul Nandi, editors, {\em Progress in
  Cryptology - {INDOCRYPT} 2012, 13th International Conference on Cryptology in
  India, Kolkata, India, December 9-12, 2012. Proceedings}, volume 7668 of {\em
  Lecture Notes in Computer Science}, pages 454--473. Springer, 2012.

\bibitem[BES16]{falcon-btc}
Soumya Basu, Ittay Eyal, and Emin~Gün Sirer.
\newblock Falcon network.
\newblock
  \url{https://www.falcon-net.org/papers/falcon-retreat-2016-05-17.pdf}, 2016.
\newblock Accessed: 2020-05-27.

\bibitem[BFI{\etalchar{+}}10]{DBLP:conf/acns/BlazyFIJSV10}
Olivier Blazy, Georg Fuchsbauer, Malika Izabach{\`{e}}ne, Amandine Jambert,
  Herv{\'{e}} Sibert, and Damien Vergnaud.
\newblock Batch groth-sahai.
\newblock In Jianying Zhou and Moti Yung, editors, {\em Applied Cryptography
  and Network Security, 8th International Conference, {ACNS} 2010, Beijing,
  China, June 22-25, 2010. Proceedings}, volume 6123 of {\em Lecture Notes in
  Computer Science}, pages 218--235, 2010.

\bibitem[BFM88]{DBLP:conf/stoc/BlumFM88}
Manuel Blum, Paul Feldman, and Silvio Micali.
\newblock Non-interactive zero-knowledge and its applications (extended
  abstract).
\newblock In Janos Simon, editor, {\em Proceedings of the 20th Annual {ACM}
  Symposium on Theory of Computing, May 2-4, 1988, Chicago, Illinois, {USA}},
  pages 103--112. {ACM}, 1988.

\bibitem[BG18]{DBLP:journals/iacr/BoweG18}
Sean Bowe and Ariel Gabizon.
\newblock Making groth's zk-snark simulation extractable in the random oracle
  model.
\newblock {\em {IACR} Cryptology ePrint Archive}, 2018:187, 2018.

\bibitem[BGR98]{DBLP:conf/eurocrypt/BellareGR98}
Mihir Bellare, Juan~A. Garay, and Tal Rabin.
\newblock Fast batch verification for modular exponentiation and digital
  signatures.
\newblock In Kaisa Nyberg, editor, {\em Advances in Cryptology - {EUROCRYPT}
  '98, International Conference on the Theory and Application of Cryptographic
  Techniques, Espoo, Finland, May 31 - June 4, 1998, Proceeding}, volume 1403
  of {\em Lecture Notes in Computer Science}, pages 236--250. Springer, 1998.

\bibitem[BKM{\etalchar{+}}16]{DBLP:journals/jpc/BackesKMMM16}
Michael Backes, Aniket Kate, Praveen Manoharan, Sebastian Meiser, and Esfandiar
  Mohammadi.
\newblock Anoa: {A} framework for analyzing anonymous communication protocols.
\newblock {\em J. Priv. Confidentiality}, 7(2), 2016.

\bibitem[BM88]{Babai1988ArthurMerlinGA}
L.~Babai and S.~Moran.
\newblock Arthur-merlin games: A randomized proof system, and a hierarchy of
  complexity classes.
\newblock {\em J. Comput. Syst. Sci.}, 36:254--276, 1988.

\bibitem[BMMV19]{cryptoeprint:2019:1177}
Benedikt Bünz, Mary Maller, Pratyush Mishra, and Noah Vesely.
\newblock Proofs for inner pairing products and applications.
\newblock Cryptology ePrint Archive, Report 2019/1177, 2019.
\newblock \url{https://eprint.iacr.org/2019/1177}.

\bibitem[BMRS20]{cryptoeprint:2020:352}
Joseph Bonneau, Izaak Meckler, Vanishree Rao, and Evan Shapiro.
\newblock Coda: Decentralized cryptocurrency at scale.
\newblock Cryptology ePrint Archive, Report 2020/352, 2020.
\newblock \url{https://eprint.iacr.org/2020/352}.

\bibitem[BPW12]{DBLP:conf/asiacrypt/BernhardPW12}
David Bernhard, Olivier Pereira, and Bogdan Warinschi.
\newblock How not to prove yourself: Pitfalls of the fiat-shamir heuristic and
  applications to helios.
\newblock In Xiaoyun Wang and Kazue Sako, editors, {\em Advances in Cryptology
  - {ASIACRYPT} 2012 - 18th International Conference on the Theory and
  Application of Cryptology and Information Security, Beijing, China, December
  2-6, 2012. Proceedings}, volume 7658 of {\em Lecture Notes in Computer
  Science}, pages 626--643. Springer, 2012.

\bibitem[BR06]{DBLP:conf/eurocrypt/BellareR06}
Mihir Bellare and Phillip Rogaway.
\newblock The security of triple encryption and a framework for code-based
  game-playing proofs.
\newblock In Serge Vaudenay, editor, {\em Advances in Cryptology - {EUROCRYPT}
  2006, 25th Annual International Conference on the Theory and Applications of
  Cryptographic Techniques, St. Petersburg, Russia, May 28 - June 1, 2006,
  Proceedings}, volume 4004 of {\em Lecture Notes in Computer Science}, pages
  409--426. Springer, 2006.

\bibitem[BSMP91]{DBLP:journals/siamcomp/BlumSMP91}
Manuel Blum, Alfredo~De Santis, Silvio Micali, and Giuseppe Persiano.
\newblock Noninteractive zero-knowledge.
\newblock {\em {SIAM} J. Comput.}, 20(6):1084--1118, 1991.

\bibitem[But14]{ethereum-whitepaper}
Vitalik Buterin.
\newblock Ethereum: A next-generation smart contract and decentralized
  application platform.
\newblock \url{https://github.com/ethereum/wiki/wiki/White-Paper}, 2014.
\newblock Accessed: 2019-08-22.

\bibitem[BW05]{DBLP:journals/dcc/BrezingW05}
Friederike Brezing and Annegret Weng.
\newblock Elliptic curves suitable for pairing based cryptography.
\newblock {\em Des. Codes Cryptogr.}, 37(1):133--141, 2005.

\bibitem[CCW19]{DBLP:journals/siaga/ChiesaCW19}
Alessandro Chiesa, Lynn Chua, and Matthew Weidner.
\newblock On cycles of pairing-friendly elliptic curves.
\newblock {\em {SIAM} J. Appl. Algebra Geom.}, 3(2):175--192, 2019.

\bibitem[Cha03]{DBLP:series/ais/Chaum03}
David Chaum.
\newblock Untraceable electronic mail, return addresses and digital pseudonyms.
\newblock In Dimitris Gritzalis, editor, {\em Secure Electronic Voting},
  volume~7 of {\em Advances in Information Security}, pages 211--219. Springer,
  2003.

\bibitem[CHM{\etalchar{+}}20]{DBLP:conf/eurocrypt/ChiesaHMMVW20}
Alessandro Chiesa, Yuncong Hu, Mary Maller, Pratyush Mishra, Noah Vesely, and
  Nicholas~P. Ward.
\newblock Marlin: Preprocessing zksnarks with universal and updatable {SRS}.
\newblock In Anne Canteaut and Yuval Ishai, editors, {\em Advances in
  Cryptology - {EUROCRYPT} 2020 - 39th Annual International Conference on the
  Theory and Applications of Cryptographic Techniques, Zagreb, Croatia, May
  10-14, 2020, Proceedings, Part {I}}, volume 12105 of {\em Lecture Notes in
  Computer Science}, pages 738--768. Springer, 2020.

\bibitem[CHP12]{DBLP:journals/joc/CamenischHP12}
Jan Camenisch, Susan Hohenberger, and Michael~{\O}stergaard Pedersen.
\newblock Batch verification of short signatures.
\newblock {\em J. Cryptology}, 25(4):723--747, 2012.

\bibitem[Cla18]{future-cash}
David Clarke.
\newblock The future of cash, protecting access to payments in the digital age.
\newblock
  \url{https://positivemoney.org/wp-content/uploads/2018/03/Positive-Money-Future-of-Cash.pdf},
  2018.

\bibitem[Cor]{fibre-btc}
Matt Corallo.
\newblock The fast internet bitcoin relay engine.
\newblock \url{http://www.bitcoinfibre.org/}.
\newblock Accessed: 2020-05-27.

\bibitem[Cor16]{bip-152}
Matt Corallo.
\newblock Bip: 152 - compact block relay.
\newblock \url{https://github.com/bitcoin/bips/blob/master/bip-0152.mediawiki},
  2016.

\bibitem[CP01]{CocksPinch}
C.~Cocks and R.~Pinch.
\newblock Identity-based cryptosystems based on the weil pairing.
\newblock unpublished manuscript, 2001.

\bibitem[CR20]{zeth-specs}
Clearmatics Cryptography~R\&D.
\newblock Zeth protocol specifications.
\newblock \url{https://github.com/clearmatics/zeth-specifications}, 2020.
\newblock Accessed: 2020-07-02.

\bibitem[CRT98]{DBLP:journals/eatcs/ClementiRT98}
Andrea E.~F. Clementi, Jos{\'{e}} D.~P. Rolim, and Luca Trevisan.
\newblock Recent advances towards proving {P} = {BPP}.
\newblock {\em Bulletin of the {EATCS}}, 64, 1998.

\bibitem[CST20]{coda-economics}
Brad Cohn, Evan Shapiro, and Emre Tekişalp.
\newblock Coda: Economics and monetary policy.
\newblock \url{https://codaprotocol.com/static/pdf/economicsWP.pdf}, 2020.

\bibitem[CT10]{DBLP:conf/innovations/ChiesaT10}
Alessandro Chiesa and Eran Tromer.
\newblock Proof-carrying data and hearsay arguments from signature cards.
\newblock In Andrew~Chi{-}Chih Yao, editor, {\em Innovations in Computer
  Science - {ICS} 2010, Tsinghua University, Beijing, China, January 5-7, 2010.
  Proceedings}, pages 310--331. Tsinghua University Press, 2010.

\bibitem[Dam10]{sigma-protocols}
Ivan Damgard.
\newblock On {$\Sigma$} protocols.
\newblock \url{https://www.cs.au.dk/~ivan/Sigma.pdf}, 2010.

\bibitem[DG09]{DBLP:conf/sp/DanezisG09}
George Danezis and Ian Goldberg.
\newblock Sphinx: {A} compact and provably secure mix format.
\newblock In {\em 30th {IEEE} Symposium on Security and Privacy (S{\&}P 2009),
  17-20 May 2009, Oakland, California, {USA}}, pages 269--282. {IEEE} Computer
  Society, 2009.

\bibitem[DH16]{digital-cash}
Ben Dyson and Graham Hodgson.
\newblock Digital cash, why central banks should start issuing electronic
  money.
\newblock
  \url{https://positivemoney.org/wp-content/uploads/2016/01/Digital_Cash_WebPrintReady_20160113.pdf},
  2016.

\bibitem[DL78]{DBLP:journals/ipl/DemilloL78}
Richard~A. DeMillo and Richard~J. Lipton.
\newblock A probabilistic remark on algebraic program testing.
\newblock {\em Inf. Process. Lett.}, 7(4):193--195, 1978.

\bibitem[DMS10]{tor-design}
Roger Dingledine, Nick Mathewson, and Paul Syverson.
\newblock
  \url{https://svn-archive.torproject.org/svn/projects/design-paper/tor-design.pdf},
  2010.
\newblock Accessed: 2020-05-27.

\bibitem[Eth]{zk-rollups}
EthHub.
\newblock Zk-rollups.
\newblock
  \url{https://docs.ethhub.io/ethereum-roadmap/layer-2-scaling/zk-rollups/}.
\newblock Accessed: 2020-05-27.

\bibitem[FGHP09]{DBLP:conf/ctrsa/FerraraGHP09}
Anna~Lisa Ferrara, Matthew Green, Susan Hohenberger, and Michael~{\O}stergaard
  Pedersen.
\newblock Practical short signature batch verification.
\newblock In Marc Fischlin, editor, {\em Topics in Cryptology - {CT-RSA} 2009,
  The Cryptographers' Track at the {RSA} Conference 2009, San Francisco, CA,
  USA, April 20-24, 2009. Proceedings}, volume 5473 of {\em Lecture Notes in
  Computer Science}, pages 309--324. Springer, 2009.

\bibitem[FS86]{DBLP:conf/crypto/FiatS86}
Amos Fiat and Adi Shamir.
\newblock How to prove yourself: Practical solutions to identification and
  signature problems.
\newblock In Andrew~M. Odlyzko, editor, {\em Advances in Cryptology - {CRYPTO}
  '86, Santa Barbara, California, USA, 1986, Proceedings}, volume 263 of {\em
  Lecture Notes in Computer Science}, pages 186--194. Springer, 1986.

\bibitem[FST10]{DBLP:journals/joc/FreemanST10}
David Freeman, Michael Scott, and Edlyn Teske.
\newblock A taxonomy of pairing-friendly elliptic curves.
\newblock {\em J. Cryptology}, 23(2):224--280, 2010.

\bibitem[GGPR13]{DBLP:conf/eurocrypt/GennaroGP013}
Rosario Gennaro, Craig Gentry, Bryan Parno, and Mariana Raykova.
\newblock Quadratic span programs and succinct nizks without pcps.
\newblock In Thomas Johansson and Phong~Q. Nguyen, editors, {\em Advances in
  Cryptology - {EUROCRYPT} 2013, 32nd Annual International Conference on the
  Theory and Applications of Cryptographic Techniques, Athens, Greece, May
  26-30, 2013. Proceedings}, volume 7881 of {\em Lecture Notes in Computer
  Science}, pages 626--645. Springer, 2013.

\bibitem[GKM{\etalchar{+}}18]{DBLP:conf/crypto/GrothKMMM18}
Jens Groth, Markulf Kohlweiss, Mary Maller, Sarah Meiklejohn, and Ian Miers.
\newblock Updatable and universal common reference strings with applications to
  zk-snarks.
\newblock In Hovav Shacham and Alexandra Boldyreva, editors, {\em Advances in
  Cryptology - {CRYPTO} 2018 - 38th Annual International Cryptology Conference,
  Santa Barbara, CA, USA, August 19-23, 2018, Proceedings, Part {III}}, volume
  10993 of {\em Lecture Notes in Computer Science}, pages 698--728. Springer,
  2018.

\bibitem[GM17]{DBLP:conf/crypto/GrothM17}
Jens Groth and Mary Maller.
\newblock Snarky signatures: Minimal signatures of knowledge from
  simulation-extractable snarks.
\newblock In Jonathan Katz and Hovav Shacham, editors, {\em Advances in
  Cryptology - {CRYPTO} 2017 - 37th Annual International Cryptology Conference,
  Santa Barbara, CA, USA, August 20-24, 2017, Proceedings, Part {II}}, volume
  10402 of {\em Lecture Notes in Computer Science}, pages 581--612. Springer,
  2017.

\bibitem[GMR85]{DBLP:conf/stoc/GoldwasserMR85}
Shafi Goldwasser, Silvio Micali, and Charles Rackoff.
\newblock The knowledge complexity of interactive proof-systems (extended
  abstract).
\newblock In Robert Sedgewick, editor, {\em Proceedings of the 17th Annual
  {ACM} Symposium on Theory of Computing, May 6-8, 1985, Providence, Rhode
  Island, {USA}}, pages 291--304. {ACM}, 1985.

\bibitem[GO94]{DBLP:journals/joc/GoldreichO94}
Oded Goldreich and Yair Oren.
\newblock Definitions and properties of zero-knowledge proof systems.
\newblock {\em J. Cryptology}, 7(1):1--32, 1994.

\bibitem[Gol11]{DBLP:books/sp/goldreich2011/Goldreich11g}
Oded Goldreich.
\newblock In a world of p=bpp.
\newblock In Oded Goldreich, editor, {\em Studies in Complexity and
  Cryptography. Miscellanea on the Interplay between Randomness and Computation
  - In Collaboration with Lidor Avigad, Mihir Bellare, Zvika Brakerski, Shafi
  Goldwasser, Shai Halevi, Tali Kaufman, Leonid Levin, Noam Nisan, Dana Ron,
  Madhu Sudan, Luca Trevisan, Salil Vadhan, Avi Wigderson, David Zuckerman},
  volume 6650 of {\em Lecture Notes in Computer Science}, pages 191--232.
  Springer, 2011.

\bibitem[GPS08]{DBLP:journals/dam/GalbraithPS08}
Steven~D. Galbraith, Kenneth~G. Paterson, and Nigel~P. Smart.
\newblock Pairings for cryptographers.
\newblock {\em Discret. Appl. Math.}, 156(16):3113--3121, 2008.

\bibitem[Gro10]{DBLP:conf/asiacrypt/Groth10}
Jens Groth.
\newblock Short pairing-based non-interactive zero-knowledge arguments.
\newblock In Masayuki Abe, editor, {\em Advances in Cryptology - {ASIACRYPT}
  2010 - 16th International Conference on the Theory and Application of
  Cryptology and Information Security, Singapore, December 5-9, 2010.
  Proceedings}, volume 6477 of {\em Lecture Notes in Computer Science}, pages
  321--340. Springer, 2010.

\bibitem[Gro16]{DBLP:conf/eurocrypt/Groth16}
Jens Groth.
\newblock On the size of pairing-based non-interactive arguments.
\newblock In Marc Fischlin and Jean{-}S{\'{e}}bastien Coron, editors, {\em
  Advances in Cryptology - {EUROCRYPT} 2016 - 35th Annual International
  Conference on the Theory and Applications of Cryptographic Techniques,
  Vienna, Austria, May 8-12, 2016, Proceedings, Part {II}}, volume 9666 of {\em
  Lecture Notes in Computer Science}, pages 305--326. Springer, 2016.

\bibitem[GRS99]{DBLP:journals/cacm/GoldschlagRS99}
David~M. Goldschlag, Michael~G. Reed, and Paul~F. Syverson.
\newblock Onion routing.
\newblock {\em Commun. {ACM}}, 42(2):39--41, 1999.

\bibitem[GS89]{DBLP:journals/acr/GoldwasserS89}
Shafi Goldwasser and Michael Sipser.
\newblock Private coins versus public coins in interactive proof systems.
\newblock {\em Advances in Computing Research}, 5:73--90, 1989.

\bibitem[GWC19]{DBLP:journals/iacr/GabizonWC19}
Ariel Gabizon, Zachary~J. Williamson, and Oana Ciobotaru.
\newblock {PLONK:} permutations over lagrange-bases for oecumenical
  noninteractive arguments of knowledge.
\newblock {\em {IACR} Cryptol. ePrint Arch.}, 2019:953, 2019.

\bibitem[HBHW16]{zcash-specs}
Daira Hopwood, Sean Bowe, Taylor Hornby, and Nathan Wilcox.
\newblock Zcash protocol specification.
\newblock
  \url{https://github.com/zcash/zips/blob/master/protocol/protocol.pdf}, 2016.
\newblock Accessed: 2020-06-15.

\bibitem[HG20]{DBLP:journals/iacr/HousniG20}
Youssef~El Housni and Aurore Guillevic.
\newblock Optimized and secure pairing-friendly elliptic curves suitable for
  one layer proof composition.
\newblock {\em {IACR} Cryptol. ePrint Arch.}, 2020:351, 2020.

\bibitem[IW97]{DBLP:conf/stoc/ImpagliazzoW97}
Russell Impagliazzo and Avi Wigderson.
\newblock \emph{P = BPP} if \emph{E} requires exponential circuits:
  Derandomizing the {XOR} lemma.
\newblock In Frank~Thomson Leighton and Peter~W. Shor, editors, {\em
  Proceedings of the Twenty-Ninth Annual {ACM} Symposium on the Theory of
  Computing, El Paso, Texas, USA, May 4-6, 1997}, pages 220--229. {ACM}, 1997.

\bibitem[Jor18]{sharding-blog-post}
Raul Jordan.
\newblock How to scale ethereum: Sharding explained.
\newblock
  \url{https://medium.com/prysmatic-labs/how-to-scale-ethereum-sharding-explained-ba2e283b7fce},
  2018.
\newblock Accessed: 2020-05-27.

\bibitem[Kil92]{DBLP:conf/stoc/Kilian92}
Joe Kilian.
\newblock A note on efficient zero-knowledge proofs and arguments (extended
  abstract).
\newblock In S.~Rao Kosaraju, Mike Fellows, Avi Wigderson, and John~A. Ellis,
  editors, {\em Proceedings of the 24th Annual {ACM} Symposium on Theory of
  Computing, May 4-6, 1992, Victoria, British Columbia, Canada}, pages
  723--732. {ACM}, 1992.

\bibitem[KV20]{cryptoeprint:2020:811}
Markulf Kohlweiss and Mikhail Volkhov.
\newblock Groth16 snarks are randomizable and (weakly) simulation extractable.
\newblock Cryptology ePrint Archive, Report 2020/811, 2020.
\newblock \url{https://eprint.iacr.org/2020/811}.

\bibitem[Lab19]{zk-sync}
Matter Labs.
\newblock zksync: scaling and privacy engine for ethereum.
\newblock \url{https://github.com/matter-labs/zksync}, 2019.

\bibitem[Lip12]{DBLP:conf/tcc/Lipmaa12}
Helger Lipmaa.
\newblock Progression-free sets and sublinear pairing-based non-interactive
  zero-knowledge arguments.
\newblock In Ronald Cramer, editor, {\em Theory of Cryptography - 9th Theory of
  Cryptography Conference, {TCC} 2012, Taormina, Sicily, Italy, March 19-21,
  2012. Proceedings}, volume 7194 of {\em Lecture Notes in Computer Science},
  pages 169--189. Springer, 2012.

\bibitem[LM07]{DBLP:conf/ima/LawM07}
Laurie Law and Brian~J. Matt.
\newblock Finding invalid signatures in pairing-based batches.
\newblock In Steven~D. Galbraith, editor, {\em Cryptography and Coding, 11th
  {IMA} International Conference, Cirencester, UK, December 18-20, 2007,
  Proceedings}, volume 4887 of {\em Lecture Notes in Computer Science}, pages
  34--53. Springer, 2007.

\bibitem[Mat09]{DBLP:conf/pkc/Matt09}
Brian~J. Matt.
\newblock Identification of multiple invalid signatures in pairing-based
  batched signatures.
\newblock In Stanislaw Jarecki and Gene Tsudik, editors, {\em Public Key
  Cryptography - {PKC} 2009, 12th International Conference on Practice and
  Theory in Public Key Cryptography, Irvine, CA, USA, March 18-20, 2009.
  Proceedings}, volume 5443 of {\em Lecture Notes in Computer Science}, pages
  337--356. Springer, 2009.

\bibitem[Mic94]{DBLP:conf/focs/Micali94}
Silvio Micali.
\newblock {CS} proofs (extended abstracts).
\newblock In {\em 35th Annual Symposium on Foundations of Computer Science,
  Santa Fe, New Mexico, USA, 20-22 November 1994}, pages 436--453. {IEEE}
  Computer Society, 1994.

\bibitem[MNT01]{Miyaji2001NewEC}
Atsuko Miyaji, Masaki Nakabayashi, and S.~Takano.
\newblock New explicit conditions of elliptic curve traces for fr-reduction.
\newblock 2001.

\bibitem[Nak09]{bitcoin-whitepaper}
Satoshi Nakamoto.
\newblock Bitcoin: A peer-to-peer electronic cash system.
\newblock \url{http://www.bitcoin.org/bitcoin.pdf}, 2009.

\bibitem[NYM19]{nym-lightpaper}
NYM.
\newblock The nym network, the next generation of privacy infrastructure.
\newblock \url{https://nymtech.net/nym-litepaper.pdf}, 2019.
\newblock Accessed: 2020-05-27.

\bibitem[OABS19]{10.1145/3340422.3343640}
Kai Otsuki, Yusuke Aoki, Ryohei Banno, and Kazuyuki Shudo.
\newblock Effects of a simple relay network on the bitcoin network.
\newblock In {\em Proceedings of the Asian Internet Engineering Conference},
  AINTEC ’19, page 41–46, New York, NY, USA, 2019. Association for
  Computing Machinery.

\bibitem[Ore87]{DBLP:conf/focs/Oren87}
Yair Oren.
\newblock On the cunning power of cheating verifiers: Some observations about
  zero knowledge proofs (extended abstract).
\newblock In {\em 28th Annual Symposium on Foundations of Computer Science, Los
  Angeles, California, USA, 27-29 October 1987}, pages 462--471. {IEEE}
  Computer Society, 1987.

\bibitem[PH09]{Pfitzmann09aterminology}
Andreas Pfitzmann and Marit Hansen.
\newblock A terminology for talking about privacy by data minimization:
  Anonymity, unlinkability, undetectability, unobservability, pseudonymity, and
  identity management, 2009.

\bibitem[PHE{\etalchar{+}}17]{DBLP:conf/uss/PiotrowskaHEMD17}
Ania~M. Piotrowska, Jamie Hayes, Tariq Elahi, Sebastian Meiser, and George
  Danezis.
\newblock The loopix anonymity system.
\newblock In Engin Kirda and Thomas Ristenpart, editors, {\em 26th {USENIX}
  Security Symposium, {USENIX} Security 2017, Vancouver, BC, Canada, August
  16-18, 2017}, pages 1199--1216. {USENIX} Association, 2017.

\bibitem[PMPS00]{DBLP:conf/pkc/PastuszakMS00}
Jaroslaw Pastuszak, Dariusz Michatek, Josef Pieprzyk, and Jennifer Seberry.
\newblock Identification of bad signatures in batches.
\newblock In Hideki Imai and Yuliang Zheng, editors, {\em Public Key
  Cryptography, Third International Workshop on Practice and Theory in Public
  Key Cryptography, {PKC} 2000, Melbourne, Victoria, Australia, January 18-20,
  2000, Proceedings}, volume 1751 of {\em Lecture Notes in Computer Science},
  pages 28--45. Springer, 2000.

\bibitem[PS96]{DBLP:conf/eurocrypt/PointchevalS96}
David Pointcheval and Jacques Stern.
\newblock Security proofs for signature schemes.
\newblock In Ueli~M. Maurer, editor, {\em Advances in Cryptology - {EUROCRYPT}
  '96, International Conference on the Theory and Application of Cryptographic
  Techniques, Saragossa, Spain, May 12-16, 1996, Proceeding}, volume 1070 of
  {\em Lecture Notes in Computer Science}, pages 387--398. Springer, 1996.

\bibitem[Rab83]{MR731318}
Michael~O. Rabin.
\newblock Transaction protection by beacons.
\newblock {\em J. Comput. System Sci.}, 27(2):256--267, 1983.

\bibitem[RZ19]{DBLP:journals/corr/abs-1904-00905}
Antoine Rondelet and Michal Zajac.
\newblock {ZETH:} on integrating zerocash on ethereum.
\newblock {\em CoRR}, abs/1904.00905, 2019.

\bibitem[Sah99]{DBLP:conf/focs/Sahai99}
Amit Sahai.
\newblock Non-malleable non-interactive zero knowledge and adaptive
  chosen-ciphertext security.
\newblock In {\em 40th Annual Symposium on Foundations of Computer Science,
  {FOCS} '99, 17-18 October, 1999, New York, NY, {USA}}, pages 543--553. {IEEE}
  Computer Society, 1999.

\bibitem[Sch74]{MR360598}
Wolfgang~M. Schmidt.
\newblock A lower bound for the number of solutions of equations over finite
  fields.
\newblock {\em J. Number Theory}, 6:448--480, 1974.

\bibitem[Sch80]{MR594695}
J.~T. Schwartz.
\newblock Fast probabilistic algorithms for verification of polynomial
  identities.
\newblock {\em J. Assoc. Comput. Mach.}, 27(4):701--717, 1980.

\bibitem[SCO{\etalchar{+}}01]{DBLP:conf/crypto/SantisCOPS01}
Alfredo~De Santis, Giovanni~Di Crescenzo, Rafail Ostrovsky, Giuseppe Persiano,
  and Amit Sahai.
\newblock Robust non-interactive zero knowledge.
\newblock In Joe Kilian, editor, {\em Advances in Cryptology - {CRYPTO} 2001,
  21st Annual International Cryptology Conference, Santa Barbara, California,
  USA, August 19-23, 2001, Proceedings}, volume 2139 of {\em Lecture Notes in
  Computer Science}, pages 566--598. Springer, 2001.

\bibitem[Sem19]{poma-attack}
Roman Semenov.
\newblock Vulnerability allowing double spend.
\newblock \url{https://github.com/appliedzkp/semaphore/issues/16}, 2019.

\bibitem[Sha92]{DBLP:journals/jacm/Shamir92}
Adi Shamir.
\newblock {IP} = {PSPACE}.
\newblock {\em J. {ACM}}, 39(4):869--877, 1992.

\bibitem[She92]{DBLP:journals/jacm/Shen92}
Alexander Shen.
\newblock {IP} = {PSPACE:} simplified proof.
\newblock {\em J. {ACM}}, 39(4):878--880, 1992.

\bibitem[Sho04]{DBLP:journals/iacr/Shoup04}
Victor Shoup.
\newblock Sequences of games: a tool for taming complexity in security proofs.
\newblock {\em {IACR} Cryptol. ePrint Arch.}, 2004:332, 2004.

\bibitem[SP06]{DBLP:journals/pieee/SampigethayaP06}
Krishna Sampigethaya and Radha Poovendran.
\newblock A survey on mix networks and their secure applications.
\newblock {\em Proceedings of the {IEEE}}, 94(12):2142--2181, 2006.

\bibitem[SS11]{DBLP:journals/em/SilvermanS11}
Joseph~H. Silverman and Katherine~E. Stange.
\newblock Amicable pairs and aliquot cycles for elliptic curves.
\newblock {\em Experimental Mathematics}, 20(3):329--357, 2011.

\bibitem[Tan96]{Tanaka_1996}
Tatsuo Tanaka.
\newblock Possible economic consequences of digital cash.
\newblock {\em First Monday}, 1(2), Aug. 1996.

\bibitem[Teb20]{duncan-attack}
Duncan Tebbs.
\newblock Security of application parameters and nested proofs.
\newblock Private communications, 2020.

\bibitem[Val08]{DBLP:conf/tcc/Valiant08}
Paul Valiant.
\newblock Incrementally verifiable computation or proofs of knowledge imply
  time/space efficiency.
\newblock In Ran Canetti, editor, {\em Theory of Cryptography, Fifth Theory of
  Cryptography Conference, {TCC} 2008, New York, USA, March 19-21, 2008},
  volume 4948 of {\em Lecture Notes in Computer Science}, pages 1--18.
  Springer, 2008.

\bibitem[Vla19]{eip1962}
Alex Vlasov.
\newblock Eip 1962: Ec arithmetic and pairings with runtime definitions.
\newblock \url{https://eips.ethereum.org/EIPS/eip-1962}, 2019.
\newblock Accessed: 2020-05-27.

\bibitem[Woo]{yellow-paper}
Gavin Wood.
\newblock Ethereum: A secure decentralised generalised transaction ledger.
\newblock \url{https://ethereum.github.io/yellowpaper/paper.pdf}.
\newblock Last accessed; 03/08/2020.

\bibitem[WZC{\etalchar{+}}18]{DBLP:conf/uss/WuZCPS18}
Howard Wu, Wenting Zheng, Alessandro Chiesa, Raluca~Ada Popa, and Ion Stoica.
\newblock {DIZK:} {A} distributed zero knowledge proof system.
\newblock In William Enck and Adrienne~Porter Felt, editors, {\em 27th {USENIX}
  Security Symposium, {USENIX} Security 2018, Baltimore, MD, USA, August 15-17,
  2018}, pages 675--692. {USENIX} Association, 2018.

\bibitem[Zip79]{MR575692}
Richard Zippel.
\newblock Probabilistic algorithms for sparse polynomials.
\newblock In {\em Symbolic and algebraic computation ({EUROSAM} '79,
  {I}nternat. {S}ympos., {M}arseille, 1979)}, volume~72 of {\em Lecture Notes
  in Comput. Sci.}, pages 216--226. Springer, Berlin-New York, 1979.

\end{thebibliography}

\appendix
\section{A note on batch verification}\label{appendix:batch-verif}

We start by defining the notion of batch SNARK verification by adapting the definition of batch verification of signatures from Camenisch, Hohenberger and Pedersen~\cite{DBLP:journals/joc/CamenischHP12}.

\begin{definition}[Batch verification of SNARKs]\label{def:batch-snark}
    Let $\secpar$ be the security parameter and $\REL$ be an NP-relation which associated language is $\LAN$. Suppose $\snark = (\kgen, \prover, \verifier, \simulator)$ is a SNARK scheme, $n \in \poly[\secpar]$, and let $\crs \gets \snark.\kgen(\REL)$. We call probabilistic $\snarkBatch$, a batch verification algorithm where the following conditions hold:
    \begin{description}
        \item[Batch-Completeness:]
        \begin{align*}
            & \snark.\verifier(\crs, \inp_j, \wit_j) = \true\,, \forall j \in [n]\\
            & \Rightarrow \snarkBatch(\crs, (\inp_0, \wit_0), \ldots, (\inp_{n-1}, \wit_{n-1})) = \true
        \end{align*}
        \item[Batch-Soundness:]
        \[
            \condprob{
                \begin{aligned}
                    & \snarkBatch(\crs, (\inp_0, \wit_0), \ldots, (\inp_{n-1}, \wit_{n-1}))\\
                    & = \true
                \end{aligned}
            }{
                \begin{aligned}
                    & \exists\ j \in [n]\ \suchthat\\
                    & \verifier(\crs, \inp_j, \wit_j) = \false
                \end{aligned}
            } \leq \negl[\secpar]
        \]
    \end{description}
\end{definition}

\subsection{Intuition behind the soundness of batching.}\label{subsubsec:batching-framework}

In~\cite{DBLP:conf/eurocrypt/BellareGR98} Bellare et al.~present a set of probabilistic checks to perform fast batch verification for modular exponentiation and digital signatures.
Later, Ferrara et al.~\cite{DBLP:conf/ctrsa/FerraraGHP09} and Blazy et al.~\cite{DBLP:conf/acns/BlazyFIJSV10} proposed batching methods in pairing-based settings.

Similar methods can be used to batch verify a set SNARKs generated for the same language (i.e.~a set of SNARKs generated under the same $\crs$).
As such, instead of checking a set of $N$ proofs separately by running $N$ times the verification algorithm, a single equation can be used to decide whether all SNARKs in the set are valid. If the batch verification equation is not satisfied, this means that \emph{at least} one proof in the set is not valid.

As such it is possible to transform system of verification equations, as below, into a single probabilistic check (we further assume here that the algorithm $\snark.\verifier()$ can be fully represented by a single check that tests that an equation - $\verifierEq$ - equals $0$ if the proof is valid).
\begin{equation}\label{eq:abstract-batch-system}
    \begin{aligned}
        \begin{cases}
            \snark.\verifierEq(\crs, \inp_0, \pi_0) (\isEq 0)\\
            \snark.\verifierEq(\crs, \inp_1, \pi_1) (\isEq 0)\\
            \ldots \\
            \snark.\verifierEq(\crs, \inp_{N-1}, \pi_{N-1}) (\isEq 0)\\
        \end{cases}
    \end{aligned}
\end{equation}

Indeed, representing the system of equations in~\cref{eq:abstract-batch-system} as a vector defined over a vector space, informally allows to ``isolate'' each proof verification into ``its own dimension''. In fact, $\vec{\Pi}$ can be obtained by decomposition in term of the \emph{standard basis} of the vector space, i.e.~:
\begin{equation}
    \vec{\Pi} = f_0 \cdot \begin{bmatrix} 1\\ 0\\ \vdots \\ 0 \end{bmatrix} + \ldots + f_{N-1} \cdot \begin{bmatrix} 0\\ \vdots \\ 0\\ 1 \end{bmatrix}\,
\end{equation}
where $f_i = \snark.\verifierEq(\crs, \inp_i, \pi_i), \in \FF_r$.

Since $\mathcal{B} = \left\{ \begin{bmatrix} 1\\ 0\\ \vdots \\ 0 \end{bmatrix}, \ldots, \begin{bmatrix} 0\\ \vdots \\ 0\\ 1 \end{bmatrix} \right\}$ is a basis of the vector space $\vectorspace{V}$, and by definition of a basis, we know that by the \emph{linear independence property} of the vectors forming a vector space basis, $\vec{\Pi} = \vec{0} \Rightarrow f_i = 0, \forall i \in [N]$, i.e.~all proofs are correct.

While checking if $\vec{\Pi} = \vec{0}$ can be done by checking all coordinates of $\vec{\Pi}$, which corresponds to checking each individual proofs one by one, it is possible to fasten this check by applying a linear transformation $\lintransform{T} \colon (\FF_r)^N \to \FF_r$ defined as: $\vec{M}\vec{x}$, where $\vec{M} = \begin{bmatrix} m_0, \ldots, m_{N-1} \end{bmatrix}$ is a $1 \times N$ matrix defined over $\FF_r$ such that $\exists\ i \in [N]\ \suchthat\ m_i \neq 0$, and $\vec{x} = \begin{bmatrix} x_0, \ldots, x_{N-1} \end{bmatrix}^\top\ \in \vectorspace{V}$.

Applying this linear transformation on $\vec{\Pi}$ allows to ``project the vector to the number line'' and simply check that the transformation input is mapped to 0 by doing a single check.
Trivially, $\vec{x} = \vec{0} = \begin{bmatrix} 0, 0, \ldots, 0 \end{bmatrix}^\top \in (\FF_r)^N$ is mapped to $0$ by $\lintransform{T}$. However, $\Ker(\lintransform{T}) = \{ (x_0, \ldots, x_{N-1}) \in \vectorspace{V}\ |\ m_0 x_0 + \ldots + m_{N-1} x_{N-1} = 0 \mod r\} \neq \{\vec{0}\}$ (i.e.~$\Ker(\lintransform{T}) \supset \{\vec{0}\}$, but $\Ker(\lintransform{T}) \nsubseteq \{\vec{0}\}$). In fact, reducing the dimensions of the space via the transform $\lintransform{T}$ breaks the ``one check $\leftrightarrow$ one dimension'' idea above-mentioned, and it is now important to make sure that no adversary can forge a vector $\vec{\Pi}$ such that $\vec{\Pi} \in \Ker(\lintransform{T}) \setminus \{\vec{0}\}$, i.e.~its projection by the linear transformation above yields the origin of the number line. This would violate the \emph{batch-soundness} property (see~\cref{def:batch-snark}).

To measure the likelihood for an adversary to hold a vector of SNARK proof that violates the soundness of the batch equation, it is necessary to estimate the number of solutions of the equation:
\begin{equation}\label{eq:lin-eq}
    m_0 x_0 + \ldots + m_{N-1} x_{N-1} = 0 \mod r
\end{equation}
, where $m_i$'s and $x_i$'s $\in \FF_r$. As the linear functional $\lintransform{T}$ is assumed to be non-trivial, we know that $\Ker(\lintransform{T}) \subset \vectorspace{V}$. However, knowing how smaller the kernel of the linear functional is compared to the source vector space is crucial to prove the soundness of batching.

Intuitively, given $\lintransform{T}$, the equation~\cref{eq:lin-eq} holds if either all terms are equal to $0$, or if 2 terms - out of the $N$ terms - ``cancel each other'', i.e.~are additive inverse in the field, or if 3 terms ``cancel each other'' etc.
This means that the number of solutions of~\cref{eq:lin-eq} is bounded by the sum of binomial coefficients,
\[
    \sum_{k=0}^{N} \binom{N}{k} = 2^N
\]

As we now know that $\cardinality{\Ker(\lintransform{T})} < 2^N,\ \forall \lintransform{T} \sample \mathfrak{V}$, where $\mathfrak{V}$ is the set of elements in the algebraic dual space of $\vectorspace{V}$ minus the trivial map $\lintransform{T}_t$ (where $\Ker(\lintransform{T}_t) = \vectorspace{V}$), on challenge a random - non-trivial - linear functional, the probability for an adversary $\adv$ (who previously committed to his vector of proofs - to ``kill'' adaptivity) to hold a vector of proofs such that it falls in $\Ker(\lintransform{T})$ is:
\begin{align*}
    \frac{\cardinality{\Ker(\lintransform{T})}}{\cardinality{\vectorspace{V}}} & < \frac{2^N}{(\cardinality{\FF_r})^N} \\
    & < \left(\frac{2}{r}\right)^N
\end{align*}

If $r$ is a large prime such that $r$ is encoded on $\secpar$ bits (i.e. $r > 2^\secpar$), the probability that the vector of proofs held by the prover lies in the kernel of the challenged linear functional is bounded by $\left(\frac{2}{2^\secpar}\right)^N = \frac{1}{2^{N(\secpar -1)}}$, which is negligible in $\secpar$.\footnote{We can also use~\cite[Lemma 1]{MR360598} to bound $\cardinality{\Ker(\lintransform{T})}$ (described as the set of solutions of a multivariate polynomial of degree 1) by $r^{n-1}$, and show that the probability to have a vector that lies in $\Ker(\lintransform{T})$ is bounded by $\frac{r^{n-1}}{r^n} = \frac{1}{r} < \frac{1}{2^\secpar}$.}

\subsubsection{Modeling proof-batching as an interactive protocol.}

Below, we model proofs batching as an interactive protocol between a prover and a verifier.
Informally, the prover has a vector of SNARKs and wants to show the verifier that all proofs are valid.


We follow the same structure as a sigma protocol~\cite{sigma-protocols}, where the prover wants to convince the verifier that he holds a vector of valid SNARKs (and associated public inputs) for a given language - represented by its $\crs$ - which we assume to be available by all parties in the protocol. First, the prover commits to his set of pairs of proofs and associated public inputs, by sending a commitment to the verifier, then the verifier samples a challenge at random from the challenge space, and the prover answers this challenge. At the end of the protocol, the verifier either accepts or rejects.

The protocol is summarized below - where $\Pi_i = (\pi_i, \inp_i)\ \forall i \in [N]$:

\begin{center}
\fbox{
\procedure{Interactive protocol for SNARK proof batching}{
 \textbf{Prover}  \< \crs \gets \snark.\kgen(\REL) \< \textbf{Verifier} \\
 c \gets \algostyle{Commit}(\Pi) \<\< \\
 \< \sendmessageright*{c} \< \\
 \<\< \lintransform{T} \sample \mathfrak{V} \\
 \< \sendmessageleft*{\lintransform{T}} \< \\
 res \gets \lintransform{T}\left(\begin{bmatrix} \snark.\verifierEq(\crs, \Pi_0), \ldots \end{bmatrix}^\top\right) \<\< \\
 \< \sendmessageright*{res} \< \\
 \<\< \algostyle{Decide}(c, \lintransform{T}, res)
}}
\end{center}

\subsubsection{Applying Fiat-Shamir to remove interactions.}

In most cases, the vector $\Pi$ of proofs and primary inputs is held by an aggregator or verifier, and is built as the set of received proof/inputs from a set of provers. As such, the (batch) verifier is \emph{not} one of the initial prover who generated one of the proofs in the batch. We are interested here, in the case where this distinction \emph{does not apply anymore}, and where the batch verifier \emph{is} one of the provers who initially generated a proof in the batch.

We consider the interactive protocol above, and make it non-interactive by using the Fiat-Shamir transform~\cite{DBLP:conf/crypto/FiatS86,DBLP:conf/eurocrypt/PointchevalS96,DBLP:conf/asiacrypt/BernhardPW12}, which yields the following - non-interactive - protocol:
\begin{center}
\fbox{
\procedure{Non-interactive protocol for SNARK proof batching}{
 \textbf{Prover}  \< \crs \gets \snark.\kgen(\REL) \< \textbf{Verifier} \\
 c \gets \algostyle{Commit}(\Pi) \<\< \\
 \lintransform{T} \gets \algostyle{GenChallenge}(c) \\
 res \gets \lintransform{T}\left(\begin{bmatrix} \snark.\verifierEq(\crs, \Pi_0), \ldots \end{bmatrix}^\top\right) \<\< \\
 \< \sendmessageright*{res} \< \\
 \<\< \algostyle{Decide}(c, \lintransform{T}, res)
}}
\end{center}

Here the verifier's challenge is derived by the prover from the first message sent in the interactive protocol. We note that the derivation of $\lintransform{T}$ can be done by calling a Random Oracle (RO) $N$ times to obtain each entry of the line matrix $\vec{M}$, representing $\lintransform{T}$, where each $m_i$ of $\vec{M}$ is obtained as $m_i \gets \algostyle{RO}(c || i)$.
However, doing so, requires to call the RO $N$ times, which may not be desired. As such, one may want to derive $\vec{M}$ as $m_0 \gets \algostyle{RO}(c)$, and derive the other $m_i, i \in [N] \setminus {0}$ as $m_i \gets m_0^{i+1}$. Now, the equation that determines the kernel of the linear transformation becomes a univariate polynomial of degree at most $N$, which, as we know by the (DeMillo-Lipton-)Schwartz-Zippel lemma~\cite{DBLP:journals/ipl/DemilloL78,MR575692,MR594695}, has at most $N$ roots.
In fact, we would have:
\[
    \Ker(\lintransform{T}) = \{ (x_0, \ldots, x_{N-1}) \in \vectorspace{V}\ |\ x_0 m_0 + x_1 m_0^2 + \ldots + x_{N-1} m_0^{N} = 0 \mod r\}
\]

which can be modeled as the set of all roots of the polynomial of degree $N$ which coefficients are $(x_0, \ldots, x_{N-1})$. Schwartz-Zippel tell us that such polynomial has at most $N$ roots, as such, the only way for the equation to be satisfied (and the batch verification check to be accepted) is either,
\begin{enumerate}
    \item to sample a challenge $\algostyle{RO}(c)$ that lends in the set of roots of the polynomial determined by the prover's input vector. The probability of this event is $\frac{N}{\cardinality{\FF_r}}$ which is negligible for low degree polynomials - $N \ll r$ (we note that $N \ll r$ will always be satisfied in applied settings, as carrying out a batch SNARK verification on $N \approx r$ SNARKs would be prohibitively expensive. Alternative protocols such as~\cite{cryptoeprint:2019:1177} would be more efficient to generate the ``wrapping proof'' at the expense of having a log-sized proof with a log-time verifier), or
    \item for the prover's polynomial to be the 0 polynomial (i.e.~for all his proofs to be valid - all checks pass and all entries in the vector defining the polynomial are 0)
\end{enumerate}

Importantly, the malicious prover who tries to violate the soundness of the batch verification, by continuously sampling a new vector of proofs and the corresponding challenge until the batch verification pass while the batch is invalid, does not have a probability higher than $N/\cardinality{\FF_r}$ to succeed since all these events are \emph{independent}. In fact, by re-running the protocol in the hope to violate the \emph{batch-soundness}, the prover generates a new set of proofs and thus a new polynomial to evaluate after receiving the challenge from the random oracle (which output is unpredictable by the prover). Hence, the prover cannot evaluate the same polynomial at multiple points.

All the operations carried out by the prover in the non-interactive protocol above may be added to $\RELZECALE$ in order to be ``carried out in the SNARK''.

\subsection{Batch verification equation for $\groth$.}

We know that the verification routine ran by $\verifier$ in~\cite{DBLP:conf/eurocrypt/Groth16} consists in checking if~\cref{eq:groth-verif} holds.

\begin{equation}\label{eq:groth-verif}
    \pair(\pi.\piA, \pi.\piB) \isEq \pair(\vk.\alpha, \vk.\beta) * \pair(\Gamma, \vk.\gamma) * \pair(\pi.\piC, \vk.\delta)
\end{equation}

where
\[
    \Gamma = \sum_{i \in [\cardinality{\inp}]} \inp_i \frac{\vk.\beta \cdot \polU_i(x) + \vk.\alpha \cdot \polV_i(x) + \polW_i(x)}{\vk.\gamma}
\]

Carrying out this check requires the verifier to compute $4$ pairings along with $\cardinality{\inp}$ scalar multiplications (i.e.~a scalar multiplication for each primary input in $\inp$).

Using batch verification allows to save a few pairing operations for the verification of multiple SNARKs.

Applying the reasoning described in the first section of~\cref{subsubsec:batching-framework}, we can represent a set of $\groth$ proof verification equations

\begin{equation}\label{eq:batch-system}
    \begin{aligned}
        \begin{cases}
            \piA_1 \piB_1 = \alpha \beta + \gamma \Gamma_1 + \delta \piC_1\\
            \piA_2 \piB_2 = \alpha \beta + \gamma \Gamma_2 + \delta \piC_2\\
            \ldots \\
            \piA_N \piB_N = \alpha \beta + \gamma \Gamma_N + \delta \piC_N\\
        \end{cases}&\Leftrightarrow&
        \begin{cases}
            \piA_1 \piB_1 - \alpha \beta - \gamma \Gamma_1 - \delta \piC_1 = 0\\
            \piA_2 \piB_2 - \alpha \beta - \gamma \Gamma_1 - \delta \piC_2 = 0\\
            \ldots \\
            \piA_N \piB_N - \alpha \beta - \gamma \Gamma_N - \delta \piC_N = 0\\
        \end{cases}
    \end{aligned}
\end{equation}

as a vector $\vec{\Pi}$ defined over the vector space $\vectorspace{V} = (\FF_r)^N$, where $r$ is the prime order of $\GRP_1, \GRP_2, \text{and}\ \GRP_T$:
\begin{align}
    \vec{\Pi} &= \begin{bmatrix}
            \piA_1 \piB_1 - \alpha \beta - \gamma \Gamma_1 - \delta \piC_1\\
            \piA_2 \piB_2 - \alpha \beta - \gamma \Gamma_1 - \delta \piC_2\\
            \vdots \\
            \piA_N \piB_N - \alpha \beta - \gamma \Gamma_N - \delta \piC_N\\
         \end{bmatrix}
\end{align}

and batch verify the set of underlying proofs.
The goal of the verifier is to check that $\vec{\Pi}$ is equal to $\vec{0}$.


\paragraph{Groth16 batched - unrolled.}

We can transform this set of equations into a single equation, by applying a random linear functional $\lintransform{T}$, represented by a line matrix of random scalars $\vec{M} = \begin{bmatrix} m_0, \ldots, m_{N-1} \end{bmatrix}$, and checking that the result is $0 \in \FF_r$, i.e.~checking that $\lintransform{T}(\Pi) = \vec{M}\Pi = 0 \in \FF_r$.
After applying the group encoding on field elements and denoting by $\vk$ (``verification key'') the part of the $\crs$ used by the verifier, the batch verification check becomes:

\begin{equation}
    \prod_{i \in [N]} \pair(\pi_i.\piA, \pi_i.\piB)^{m_i} \isEq \prod_{i \in [N]} \left[\pair(\vk.\alpha, \vk.\beta) * \pair(\Gamma_i, \vk.\gamma) * \pair(\pi_i.\piC, \vk.\delta)\right]^{m_i}
\end{equation}

Which gives:

\begin{equation}\label{eq:groth-batch-verify}
    \prod_{i \in [N]} \pair\left(m_i \cdot \pi_i.\piA, \pi_i.\piB\right) \isEq \pair\left(\textstyle\sum_{i \in [N]} m_i \cdot \vk.\alpha, \vk.\beta\right) * \pair\left(\widetilde{\Gamma}, \vk.\gamma\right) * \pair\left(\textstyle\sum_{i \in [N]} m_i \cdot \pi_i.\piC, \vk.\delta\right)
\end{equation}

where:

\[
    \widetilde{\Gamma} = \sum_{j \in [\cardinality{\inp}]} \frac{\left[\vk.\beta \cdot \polU_j(x) + \vk.\alpha \cdot \polV_j(x) + \polW_j(x)\right] (\sum_{i \in [N]} m_i \cdot \inp_j)}{\vk.\gamma}
\]

In contrast with~\cref{eq:groth-verif}, the batch verification equation~\cref{eq:groth-batch-verify} requires $N + 3$ pairings (instead of $4 \times N$) along with $1 + \cardinality{\inp} + 2N$ scalar multiplications (instead of $N \times \cardinality{\inp}$) to verify $N$ SNARKs.
As such, a few pairing computations are removed at the expense of more scalar multiplications, which as we know, are significantly cheaper to carry out.

\section{Practical considerations of SNARK batching}\label{appendix:practical-considerations}

While it is possible to batch verify a set of nested proofs in \zecale, one needs to understand the limitations of such technique. In fact, batch verification presents some challenges related forgeries detection in batches, which can ultimately, significantly slow the system down.

\subsection{Forgery detection.}

As above-mentioned, the batch verification equation for $\groth$ can be used to efficiently check that all entries in a set of SNARKs are valid. In other words, the batch verification method \emph{detects the presence of forgeries in a set} of SNARKs.
Nevertheless, while a failing batch verification exposes the presence of forgeries in the set of inputs, it does not allow to identify which input - in the set - is a forgery. While a trivial way to identify forgeries consists in verifying all inputs individually, being able to identify forgeries in a batch efficiently is non-trivial and has lead to various work.

\medskip

While ``divide-and-conquer'' approaches such as the one proposed by Pastuszak et al.~\cite{DBLP:conf/pkc/PastuszakMS00} can be used to converge toward forgeries in the set of inputs, we note that this method introduces an important overhead on the system. Later, Law and Matt~\cite{DBLP:conf/ima/LawM07,DBLP:conf/pkc/Matt09} proposed an improved forgery identification protocol for pairing-based signatures, improving on the initial design of binary-tree search methods.

While studying SNARK forgery identification is out of scope of this work, we note that carrying out forgery identification upon invalid SNARK batch verification gives a leverage to an adversary $\adv$ who can send a low volume of SNARK forgeries to the aggregator, enough to have one forgery in each batch, in order to cause a severe slow down in the system~\cite{DBLP:conf/indocrypt/BernsteinDLO12}.

As such, some care needs to be taken before considering using batch SNARK verification as an optimization in protocols like $\zecale$.

\subsection{Processing and verifying invalid proofs.}

For systems like $\zecale$, where an aggregator piece of software receives a set of proofs and relays the result of their aggregation on-chain, it may be desired to process invalid proofs. In fact, settling a transaction containing an invalid proof on a blockchain system acts as a proof that the transaction has been processed - and not censored. As such, even if a SNARK does not verify successfully on an aggregator, one may still want to send the result of this erroneous verification on-chain to show the progress of the system, and to inform the transaction originator that the transaction has been processed and added on-chain.


\end{document}